\newcommand{\I}{\ensuremath{\operatorname{\mathbb{I}}}}
\renewcommand{\P}{\ensuremath{\operatorname{\mathbb{P}}}}
\newcommand{\E}{\ensuremath{\operatorname{\mathbb{E}}}}
\newcommand{\var}{\ensuremath{\operatorname{Var}}}
\newcommand{\cov}{\ensuremath{\operatorname{Cov}}}
\let\originalleft\left
\let\originalright\right
\renewcommand{\left}{\mathopen{}\mathclose\bgroup\originalleft}
\renewcommand{\right}{\aftergroup\egroup\originalright}
\theoremstyle{plain}
\newtheorem{Theorem}{Theorem}
\newtheorem{lemma}{Lemma}
\newtheorem{proposition}{Proposition}
\begin{document}

\title{Network histograms and universality of blockmodel approximation}

\author{Sofia C.~Olhede and Patrick J.~Wolfe}

\maketitle

\begin{abstract}
In this article we introduce the network histogram: a statistical summary of network interactions, to be used as a tool for exploratory data analysis. A network histogram is obtained by fitting a stochastic blockmodel to a single observation of a network dataset. Blocks of edges play the role of histogram bins, and community sizes that of histogram bandwidths or bin sizes.  Just as standard histograms allow for varying bandwidths, different blockmodel estimates can all be considered valid representations of an underlying probability model, subject to bandwidth constraints.  Here we provide methods for automatic bandwidth selection, by which the network histogram approximates the generating mechanism that gives rise to exchangeable random graphs.  This makes the blockmodel a universal network representation for unlabeled graphs. With this insight, we discuss the interpretation of network communities in light of the fact that many different community assignments can all give an equally valid representation of such a network.  To demonstrate the fidelity-versus-interpretability tradeoff inherent in considering different numbers and sizes of communities, we analyze two publicly available networks---political weblogs and student friendships---and discuss how to interpret the network histogram when additional information related to node and edge labeling is present.

\vspace{\baselineskip}%
\noindent Key words: Community detection, exchangeable random graphs, graphons, nonparametric statistics, statistical network analysis, stochastic blockmodels
\end{abstract}

The purpose of this article is to introduce the network histogram---a nonparametric statistical summary obtained by fitting a stochastic blockmodel to a single observation of a network dataset.  A key point of our construction is that it is not necessary to assume the data to have been generated by a blockmodel.  This is crucial, since networks provide a general means of describing relationships between objects.  Given $n$ objects under study, a total of $\binom{n}{2}$ pairwise relationships are possible.  When only a small fraction of these relationships are present---as is often the case in modern high-dimensional data analysis across scientific fields---a network representation simplifies our understanding of this dependency structure.

One fundamental characterization of a network comes through the identification of community structure~\cite{girvan2002community}, corresponding to groups of nodes that exhibit similar connectivity patterns. The canonical statistical model in this setting is the stochastic blockmodel~\cite{holland1983stochastic}: it posits that the probability of an edge between any two network nodes depends only on the community groupings to which those nodes belong.   Grouping nodes together in this way serves as a natural form of dimensionality reduction: as $n$ grows large, we cannot retain an arbitrarily complex view of all possible pairwise relationships.  Describing how the full set of $n$ objects interrelate is then reduced to understanding the interactions of $k \ll n$ communities.  Studying the properties of fitted blockmodels is thus  important~\cite{bickel2009nonparametric, zhao2011community}.

Despite the popularity of the blockmodel, and its clear utility, scientists have observed that it often fails to describe all the structure present in a network~\cite{airoldi2008mixed, karrer2011stochastic, newman2011communities, gopalan2013efficient}.  Indeed, as a network becomes larger, it is no longer reasonable to assume that a majority of its structure can be explained by a blockmodel with a fixed number of blocks.  Extensions to the blockmodel have focused on capturing additional variability, for example through mixed community membership~\cite{airoldi2008mixed} and degree correction~\cite{karrer2011stochastic, zhao2012consistency}.  However, the simplest and most natural method of extending the descriptiveness of the blockmodel is to add blocks, so that $k$ grows with $n$.  As more and more blocks are fitted, we expect an increasing degree of structure in the data to be explained. The natural questions to ask then are many:  What happens as we fit more blocks to an arbitrary network dataset, if the true data-generating mechanism is not a blockmodel? At what rate should we increase the number of blocks used, depending on the variability of the network?  We discuss these and other questions in this article.

We will stipulate how the dimension $k$ of the fitted blockmodel should be allowed to increase with the size $n$ of the network.  This increase will be dictated by a tradeoff between the sparsity of the network and its heterogeneity or smoothness.  If one assumes that a $k$-community blockmodel is the actual data-generating mechanism, then theory has already been developed which allows $k$ to grow with $n$~\cite{rohe2011spectral, choi2012stochastic, chatterjee2012matrix}, and methods have been suggested for choosing the number of blocks based on the data~\cite{fishkind2013consistent, bickel2013hypothesis}.  General theory for the case when the blockmodel is merely \emph{approximating} the observed network structure is nascent, with~\cite{choi2012co} treating the case of dense bipartite graphs with a fixed number of blocks, and~\cite{wolfe2013} establishing the first such results for the setting of relevance here.

\section{From stochastic networks to histograms}\label{exchangeable}

\subsection{A simple stochastic network model}

We encode the relationships between $n$ objects using $\binom{n}{2}$ binary random variables. Each of these variables indicates the presence or absence of an edge between two nodes, and can be collected into an $n \times n$ adjacency matrix $A$, such that $A_{ij} = 1$ if nodes $i$ and $j$ are connected, and $A_{ij} = 0$ otherwise, with $A_{ii} = 0$.  This yields what is known as a simple random graph.

Models for unlabeled graphs are strongly related to the statistical notion of exchangeability, a fundamental concept describing random variables whose ordering is without information. To relate to exchangeable variables, we appeal to the Aldous--Hoover theorem~\cite{bickel2009nonparametric}, and model our network hierarchically using three components:
\begin{enumerate}
\item A fixed, symmetric function $f(x,y)$ termed a graphon~\cite{lovasz2012large}, which behaves like a probability density function for $0 < x,y < 1$;
\item For each $n$, a random sample $\xi$ of $n$ uniform random variables $\{ \xi_1, \ldots, \xi_n\}$ which will serve to index the graphon $f(x,y)$; and
\item For each $n$, a deterministic scaling constant $\rho_n > 0$, specifying the expected fraction of edges $ \smash{ \binom{n}{2}^{-1} \E \sum_{i<j} A_{ij} } $ in the network.
\end{enumerate}
For each $n$, our simple stochastic network model is then
\begin{equation}
\label{eqn:Aij}
A_{ij} \,\vert\, \xi_i, \xi_j \sim \operatorname{Bernoulli}\bigl( \, \rho_n f\left(\xi_i,\xi_j\right) \, \bigr), \quad 1 \leq i < j \leq n,
\end{equation}
where for statistical identifiability of $\rho_n$ we assume
\begin{equation}
\label{eq:f-norm}
\textstyle \iint_{(0,1)^2} f\left(x,y\right) \,dx \,dy = 1.
\end{equation}

In this way we model the network structure itself---rather than the particular ordering in which the network's nodes are arranged in $A$.  As an example, Fig.\ \ref{fig:polblogsdata} shows three different orderings of the adjacency matrix of a network of US political weblogs recorded in 2005~\cite{adamic2005}, each emphasizing a different aspect of the network.
\begin{figure*}[t]
\hspace{-3.33cm}\includegraphics[width=1.5\columnwidth]{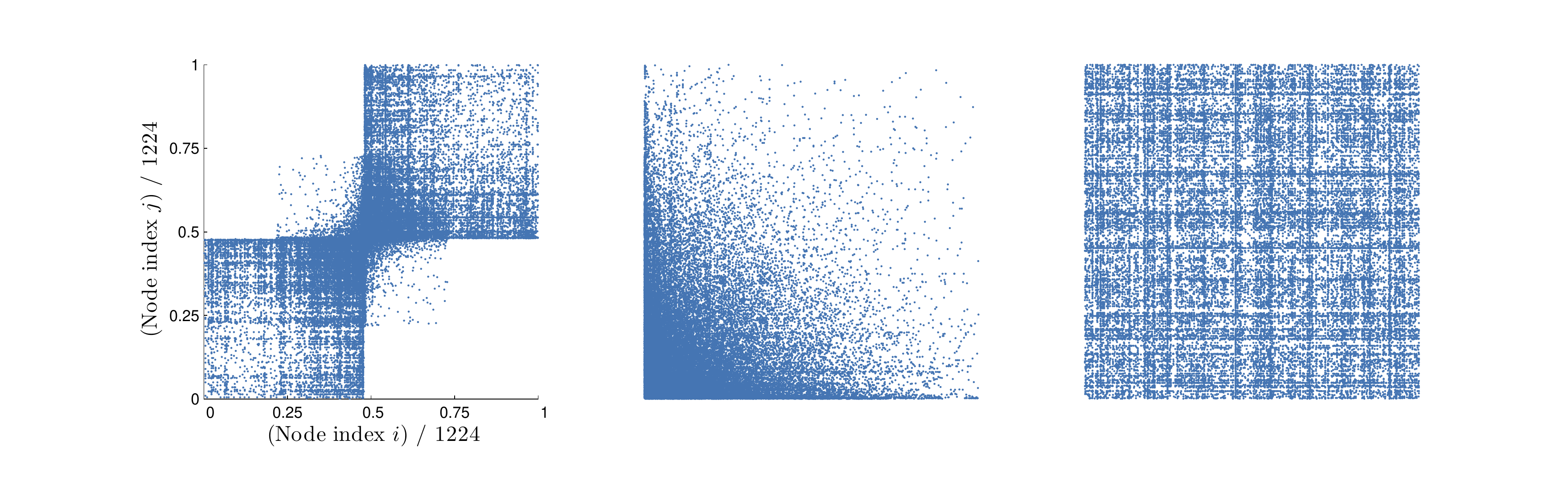}
\caption{\label{fig:polblogsdata} Three adjacency matrix representations of the political weblog data of~\cite{adamic2005}, each showing all 1224 blogs with at least one link to another blog in the dataset (links denoted by blue dots).  The first 586 blogs in the leftmost panel are categorized by~\cite{adamic2005} as liberal and the remaining 638 as conservative; note the sparsity of cross-linkages. The middle panel shows the same data, ordered by decreasing number of links, while the third panel shows how a random labeling obscures structure.}
\end{figure*}

We see from this generative mechanism that any (symmetric) re-arrangement of the $x$- and $y$-axes of $f$ will lead to the same probability distribution on unlabeled graphs, and in fact a graphon describes an entire equivalence class of functions.  We assume that at least one member of this equivalence class is H\"older continuous, which we refer to as $f$ without loss of generality; that $f$ is bounded away from 0 and $\rho_n f$ is bounded away from 1; and that the sequence $\rho_n$ is monotone non-increasing and decays more slowly than $ n^{-1} \log^3 n $, so that the average network degree grows faster than $\log^3 n$.

To summarize the network we therefore wish to estimate the graphon $f\left(x,y\right)$, up to re-arrangement of its axes. By inspection,
\begin{equation*}
\P\left(A_{ij}=1\right) = \E A_{ij} = \rho_n \textstyle \iint_{(0,1)^2} f\left(x,y\right) \,dx \,dy = \rho_n,
\end{equation*}
and so we may estimate $\rho_n$ via the sample proportion estimator
\begin{equation}
\label{eq:rhohat}
\hat \rho_n = \tbinom{n}{2}^{-1} \textstyle \sum_{i<j} A_{ij} .
\end{equation}

\subsection{The network histogram}

Given a single adjacency matrix $A$ of size $n \times n$, we will estimate $f\left(x,y\right)$ (up to re-arrangement of its axes) using a stochastic blockmodel with a single, pre-specified community size $h$, to yield a network histogram.  Choosing the bandwidth $h$ is equivalent to choosing a specific number of communities $k$---corresponding to the number of bins in an ordinary histogram setting.

To define the network histogram, we first write the total number of network nodes $n$ in terms of the integers $h$, $k$, and $r$ as $n  = h k + r$, where $k = \left\lfloor n / h \right\rfloor$ is the total number of communities; $h$ is the corresponding bandwidth, ranging from 2 to $n$; and $r = n \mod h $ is a remainder term between 0 and $h-1$.  To collect together the nodes of our network that should lie in the same group, we introduce a community membership vector $z$ of length $n$.  All components of $z$ will take values in $\{1,\dots,k\}$, and will share the same values whenever nodes are assigned to the same community.

The main challenge in forming a network histogram lies in estimating the community assignment vector $z$ from $A$.  To this end, for each $n$, let the set $\mathcal{Z}_k \subseteq \{1,\ldots,k\}^n $ contain all community assignment vectors $z$ that respect the given form of $n  = h k + r$.  Thus $\mathcal{Z}_k$ consists of all vectors $z$ with $h$ components equal to each of the integers from 1 to $k-1$ (up to relabeling), and $h+r$ components equal to $k$ (again, up to relabeling).  In this way, $\mathcal{Z}_k$ indexes all possible histogram arrangements of network nodes into $k-1$ communities of equal size $h$, plus an additional community of size $h+r$.

Many ways of estimating $z$ from a single observed adjacency matrix $A$ have been explored in the literature.  In essence, nodes that exhibit similar connectivity patterns are likely to be grouped together (an idea that can be exploited directly if multiple observations of the same network are available; see~\cite{airoldi2013stochastic}).  We can formalize this notion through the method of maximum likelihood, by estimating
\begin{equation}
\label{eq:MPLE}
\! \hat z= \operatornamewithlimits{argmax}_{ z \in \mathcal{Z}_k } \sum_{ i < j } \left\{ A_{ij} \log \bar A_{z_i z_j} + \left( 1 \!-\! A_{ij} \right) \log \left( 1 \!-\! \bar A_{z_i z_j} \right) \right\} \!  , \!\!\!
\end{equation}
where for all $ 1 \leq a,b \leq k$ we define the histogram bin heights
\begin{equation}
\label{eq:Abar}
\bar A_{ab} = \frac{ \sum_{ i < j } A_{ ij } \I\left( \hat z_i = a \right) \I\left( \hat z_j = b \right) }{ \sum_{i<j} \I\left( \hat z_i = a \right) \I\left( \hat z_j = b \right) }.
\end{equation}
Each bin height $\bar A_{ab}$ is the proportion of successes (edges present) in the histogram bin corresponding to a block of Bernoulli trials, with the grouping of nodes into communities determined by the objective function in~\eqref{eq:MPLE}.  Since $A$ is symmetric, we have $\bar A_{ab} = \bar A_{ba}$.

Combining~\eqref{eq:rhohat} and~\eqref{eq:Abar}, we obtain our network histogram:
\begin{equation}
\label{eq:hist-est}
\! \hat f\left( x , y ; h \right) = \hat \rho_n^+ \bar A_{ \min( \lceil n x/ h\rceil, k) \min( \lceil n y/ h\rceil, k) }  , \,\,\, 0 < x,y < 1, \!\!
\end{equation}
with $ \hat \rho_n^+ $ the generalized inverse of $ \hat \rho_n $.

\section{Universality of blockmodel approximation}

\subsection{Blockmodel approximations of unlabeled graphs}

To understand the performance of blockmodel approximation, we must compare $\smash{ \hat f }$ to $f$ in a way that is invariant to all symmetric re-arrangements of the axes of $f$.  We will base our comparison on the graph-theoretic notion of cut distance, which in mathematical terminology defines a compact metric space on graphons~\cite{lovasz2012large}.  Just as our notion of unlabeled graphs treats any two adjacency matrices as the same if one can be obtained by symmetrically permuting the rows and columns of the other, we will compare two graphons via an invertible, symmetric rearrangement of the $x$ and $y$ axes that relates one graphon to the other.  We call $\mathcal{M}$ the set of all such rearrangements---formally, it is the set of all measure-preserving bijections of the form $[0,1] \rightarrow [0,1]$.

In~\cite{wolfe2013} we formulated convergence rates at which the resulting error between $\smash{ \hat f }$ and $f$ shrinks to zero as $ n \rightarrow \infty $ under the assumptions above.  Here we consider mean integrated square error (MISE), typically used in standard histogram theory (see, e.g.,~\cite{tsybakov2009introduction}), and take its greatest lower bound over all possible rearrangements $\sigma \in \mathcal{M}$:
\begin{equation}
\label{eq:MISE}
\!\operatorname{MISE}\bigl( \hat f \bigr) = \E \inf_{\sigma\in \mathcal{M} } \! \iint_{(0,1)^2} \!\!\!\!\!\!\!\! \bigl| f\left( x,y \right) - \hat f\bigl( \sigma(x),\sigma(y) ; h \bigr) \bigr|^2 \, dx \, dy .\!\!\!
\end{equation}
This definition factors out the unknown ordering of the data $A$ induced by $ \{ \xi_1, \ldots, \xi_n\} $ in the model of~\eqref{eqn:Aij}, accounting for the fact that $A$ may represent an unlabeled graph. The appearance of $\sigma$ may at first seem counterintuitive, but its introduction is necessary once we use~\eqref{eqn:Aij} to model $A$. In contrast to the optimization of~\eqref{eq:MISE} over all $\sigma \in \mathcal{M}$, which is purely conceptual, the vector $\hat z$ results from the algorithmic optimization of~\eqref{eq:MPLE} given an observed adjacency matrix $A$, and determines which entries of $A$ are averaged to estimate $f(x,y)$.

Using a single bandwidth $h$ to form $ \smash{ \hat f\left( x , y ; h \right) } $ in~\eqref{eq:MISE} represents a conceptual paradigm shift away from the standard usage of the stochastic blockmodel.  Instead of representing community structure, a blockmodel can be used as a universal mechanism to represent an arbitrary unlabeled network. In practice, of course, we may well have information that implies certain labelings or orderings of the network nodes.  The assumption of exchangeability models our ignorance of this information as a baseline, just as we may choose to cluster a Euclidean dataset without taking into account any accompanying labels.  Thus we require our error metric to respect this ignorance, even if we later choose to interpret a fitted histogram in light of node labels (as one might with Euclidean data clusters, and as we shall do below).

The goal in the setting of exchangeable networks is therefore no longer to discover latent community structure, but rather simply to group together nodes whose patterns of interactions are similar.  Thus the interpretation of the fitted groups has altered. Instead of uncovering true underlying communities that might have given rise to the data, our blocks now approximate the generative process, up to a resolution chosen by the user---namely the bandwidth, $h$.  This can be related to previous understanding of the error behavior when the data are generated by a blockmodel, both in the regimes of $\rho_n$ corresponding to growing degrees~\cite{choi2012stochastic} as well as even sparser ones~\cite{krzakala2013spectral}.

\subsection{The oracle network labeling}

We next show how the ideal or \emph{oracle} labeling information, were it to be available, would yield the optimal bandwidth parameter $h$ for any given network histogram.  This oracle information arises from the latent random variables $\{ \xi_1, \ldots, \xi_n\}$ present in the generative model of~\eqref{eqn:Aij}.  In this setting, instead of fitting blocks of varying sizes to the network, to be interpreted as community structure, we rely on the fact that the simplest type of blockmodel will suffice, with only a single tuning parameter $h$.  The existence of a smooth limiting object---namely the graphon $f\left(x,y\right)$---implies that a \emph{single} community size or bandwidth will provide an adequate summary of the entire network.

To choose $h$, we therefore employ the notion of a network oracle.  As in standard statistical settings~\cite{tsybakov2009introduction}, the oracle provides information that is not ordinarily available, thereby serving to bound the performance of any data-driven estimation procedure.  The oracle estimator for each histogram bin height takes the same form as~\eqref{eq:Abar}, but uses a unique (almost surely) labeling $\tilde z$ calculated from the latent random vector $\xi$.  This labeling is given by $ \smash{ \tilde z_i = \min\left\{ \lceil (i)^{-1} / h \rceil, k \right\} } $, where $\left(i\right)^{-1}$ is the rank, from smallest to largest, of the $i$th element of $\xi$.  Thus, $ \tilde z $ orders elements of the unobserved vector $\xi$, sorts the indices of the data according to this ordering, and then groups these indices into sets of size $h$, with one additional set of size $h+r$.

With the oracle labeling $\tilde z$, we may define the graphon oracle estimator from the block averages $\bar A^*_{ab}$ according to
\begin{align}
\nonumber
\bar A^*_{ab} &= \frac{ \sum_{ i<j} A_{ ij } \I\left(\tilde z_i=a\right) \I\left(\tilde z_j=b\right) }{  \sum_{ i<j} \I\left(\tilde z_i=a\right) \I\left(\tilde z_j=b\right) } ,
\\
\label{graph-est}
\hat f^*\left(x,y; h\right) &= \rho_n^{-1} \bar A^*_{ \min( \lceil n x/ h\rceil, k) \min( \lceil n y/ h\rceil, k) } .
\end{align}
Comparing~\eqref{graph-est} with its counterpart in~\eqref{eq:hist-est}, we see that the oracle serves to replace the estimators of~\eqref{eq:rhohat} and~\eqref{eq:MPLE} with their ideal quantities. Thus the oracle estimator is based on a priori knowledge of the sparsity parameter $\rho_n$ and the latent vector $\xi$.  In this sense, it shows the best performance that can be achieved for a fixed bandwidth $h$, by providing knowledge of the scaling and ordering necessary for the estimator to become a linear function of the data.

\section{Determining the histogram bandwidth}

\subsection{Oracle mean-square error bound}

By making use of the network oracle, we can determine what performance limits are possible, and in turn derive a rule of thumb for selecting the bandwidth $h$.  We assume here that $f$ is differentiable, noting that this result extends to H\"older continuous functions, as shown in Appendix~\ref{app}.

\smallskip
\begin{Theorem}[Network histogram oracle bandwidth selection]\label{MISE}
Assume that $h$ grows more slowly than $n$, and that the graphon $f\left(x,y\right)$ is differentiable, with a gradient magnitude bounded by $M$.  Then as $n$ grows the oracle mean integrated square error satisfies the bound
\begin{equation*}
\!\!\!\! \operatorname{MISE}\bigl( \hat f^* \bigr)
\le M^2 \left\{ 2 \left( \tfrac{h}{n} \right)^{ 2 } +  \tfrac{1}{n}  + \tfrac{1}{M^2} \left( \tfrac{1}{h^2\rho_n} \right) \right\} \left\{ 1 + o(1) \right\} . \!\!\!\!
\end{equation*}
The right-hand side of this expression is minimized by setting $h \!=\! h^*$:
\begin{equation}
\label{opt-block}
 h^* = (2 M^2 \rho_n)^{-1/4} \cdot \sqrt{n},
\end{equation}
whence $ \operatorname{MISE}\bigl( \hat f^* \bigr) $ evaluated at $h^*$ decays at the rate $ \smash{ 1 / \sqrt{ \binom{n}{2} \rho_n } } $:
\begin{equation}
\label{eq:orac-mise-hstar}
\operatorname{MISE}\bigl( \hat f^* \bigr) \Bigr|_{h = h^*}
\!\!\!\! \!\!\!\! \!\!\! \le M^2 \left[ \tfrac{2}{M} \left\{ \tbinom{n}{2} \rho_n \right\}^{-1/2} \!\! +  \tfrac{1}{n}  \right] \left\{ 1 + o(1) \right\} \! . \!
\end{equation}
\end{Theorem}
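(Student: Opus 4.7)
The plan is to prove the bound via a classical bias--variance decomposition adapted to the oracle setting, and then optimize in $h$. Because $\tilde z$ orders indices by the ranks of $\xi$, the natural choice of $\sigma \in \mathcal{M}$ for comparing $\hat f^*(\sigma(x),\sigma(y);h)$ to $f(x,y)$ is the measure-preserving map sending the regular grid $\{i/n\}$ to the order statistics $\{\xi_{(i)}\}$. With this $\sigma$, the rearranged oracle estimator is piecewise constant on rectangles $R_{ab}$ whose sides are consecutive order statistics of $\xi$; within each $R_{ab}$, $\rho_n \hat f^*$ is the sample mean of the independent $\operatorname{Bernoulli}(\rho_n f(\xi_i,\xi_j))$ variables whose indices satisfy $\tilde z_i=a$ and $\tilde z_j=b$.

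First I would insert the (random) block mean $\bar f^*_{ab}:=\E[\bar A^*_{ab}\,|\,\xi]/\rho_n$ and decompose pointwise into a bias term $|f(x,y)-\bar f^*_{ab}|^2$ and a variance term $|\bar f^*_{ab}-\rho_n^{-1}\bar A^*_{ab}|^2$; the cross term vanishes after conditioning on $\xi$ and taking expectations, since $\bar f^*_{ab}$ is precisely the conditional mean of $\rho_n^{-1}\bar A^*_{ab}$. For the bias on $R_{ab}$, I would Taylor-expand $f$ around a point of the bin: differentiability with $\|\nabla f\|\le M$ gives a pointwise bias bounded by $M$ times the bin diameter, and integrating the square over a bin of expected side $h/n$ in each coordinate and summing over $k^2$ bins yields the $2M^2(h/n)^2$ term, the factor $2$ coming from the two coordinates. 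For the variance, each $\bar A^*_{ab}$ is an average of $\asymp h^2$ independent Bernoulli variables with variance at most $\rho_n$, so $\operatorname{Var}(\rho_n^{-1}\bar A^*_{ab})\lesssim 1/(h^2\rho_n)$, which after integrating against the area of $R_{ab}$ and summing to $1$ gives the $M^{-2}(h^2\rho_n)^{-1}$ contribution.

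The $1/n$ term has two sources that need to be absorbed carefully. The first is the residual block of size $h+r$ together with the zero diagonal of $A$; these affect a strip of measure $O(1/n)$ on which boundedness of $f$ makes the integrand $O(1)$. The second, more delicate source is the deviation of the order statistics $\xi_{(i)}$ from their nominal positions $i/n$: by a uniform quantile bound, $\max_i|\xi_{(i)}-i/n|=O_P(\sqrt{\log n/n})$, and this propagates through $\sigma$ into an extra bias term of the same order that is absorbed in the $\{1+o(1)\}$ factor. The main obstacle is precisely this step: keeping the bias bound clean while aligning $\sigma$ with the random order statistics so that, conditional on $\xi$, the variance calculation still treats bins as fixed.

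Once the bound $\operatorname{MISE}(\hat f^*)\le M^2\{2(h/n)^2+1/n+M^{-2}(h^2\rho_n)^{-1}\}\{1+o(1)\}$ is in place, the minimization in $h$ is a one-variable calculus exercise: differentiating the $h$-dependent terms and setting the derivative to zero yields $h^4=n^2/(2M^2\rho_n)$, i.e.\ $h^*=(2M^2\rho_n)^{-1/4}\sqrt{n}$. Substituting back balances the bias and variance contributions at the common rate $M\{\binom{n}{2}\rho_n\}^{-1/2}$, and the residual $1/n$ term carries over unchanged, producing the stated bound at $h=h^*$.
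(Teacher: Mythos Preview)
Your overall plan---bias--variance split, Lipschitz control of the bias, Bernoulli variance on blocks, then calculus in $h$---matches the paper's structure, but there is a genuine gap in your choice of $\sigma$, and your account of the $M^2/n$ term is off.

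The map you propose is not in $\mathcal{M}$. A measure-preserving bijection of $[0,1]$ cannot send the deterministic interval $[(a-1)h/n,\,ah/n]$ of length $h/n$ onto the random interval $[\xi_{((a-1)h)},\xi_{(ah)}]$ of length $L_a$, since almost surely $L_a\neq h/n$; the $k$ target intervals already partition $[0,1]$, so there is no room to absorb the discrepancy. Consequently you cannot arrange for $\hat f^*(\sigma(\cdot),\sigma(\cdot);h)$ to be exactly piecewise constant on the random rectangles $R_{ab}$ while keeping $\sigma\in\mathcal{M}$. The paper sidesteps this entirely by taking $\sigma$ equal to the identity; then $\hat f^*$ is constant on the fixed rectangles $\omega_{ab}$, and the randomness of $\xi$ enters only through the unconditional moments of $\bar A^*_{ab}$.

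With $\sigma=\mathrm{id}$ the bias term is $|f(x,y)-\rho_n^{-1}\E\bar A^*_{ab}|^2$, which the paper splits as $\{f-\bar f_{ab}\}+\{\bar f_{ab}-\rho_n^{-1}\E\bar A^*_{ab}\}$. The first piece gives your $2M^2(h/n)^2$. The second piece---together with the matching correction to $\var\bar A^*_{ab}$---is the true source of $M^2/n$: since $\E\bar A^*_{ab}=\rho_n\,\E_\xi\bigl[h_{ab}^{-2}\sum f(\xi_{(i)},\xi_{(j)})\bigr]$ and $\var\xi_{(i)}=i_n(1-i_n)/(n+2)\le 1/\{4(n+2)\}$, the Lipschitz bound yields $|\E\bar A^*_{ab}-\rho_n\bar f_{ab}|\le \rho_n M(2n)^{-1/2}\{1+o(1)\}$. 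Squaring gives $M^2/(2n)$, and an identical $M^2/(2n)$ appears in the covariance contribution to $\rho_n^{-2}\var\bar A^*_{ab}$, summing to $M^2/n$. This is an $L^2$ effect, driven by $\E|\xi_{(i)}-i_n|^2=O(1/n)$, not the sup-norm rate $O_P(\sqrt{\log n/n})$ you cite, and it is not a boundary effect: the residual block has width $O(h/n)$, not $O(1/n)$, and the zero diagonal only changes $h^2$ to $\binom{h}{2}$ on $k$ of the $k^2$ blocks, which is lower order. Once you adopt $\sigma=\mathrm{id}$ and track the order-statistic variance through the moments of $\bar A^*_{ab}$, your variance and optimization steps go through as written.
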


\begin{proof}
We evaluate~\eqref{eq:MISE} with $\hat f$ set equal to $\hat f^*$ as defined in~\eqref{graph-est}, and with $\sigma(x)$ set equal to $x$ to obtain an upper bound on the error criterion $\smash{\operatorname{MISE}\bigl( \hat f^* \bigr)}$.  This yields the bias--variance decomposition
\begin{multline*}
\operatorname{MISE}\bigl( \hat f^* \bigr)
\le \E \iint_{(0,1)^2} \bigl| f\left( x,y \right) - \hat f^*\left( x,y ; h \right) \bigr|^2 \, dx \, dy
\\ = \sum_{a,b=1}^k \iint_{\omega_{ab}} \left\{ \bigl| f\left(x,y\right) - \rho_n^{-1} \E\bar A^*_{ab} \bigr|^2 + \rho_n^{-2} \var \bar A^*_{ab} \right\} \, dx \, dy ,
\end{multline*}
with $\omega_{ab}$ the domain of integration corresponding to the block $ \bar A_{ab}$.

Now let $ \bar f_{ab}= \left| \omega_{ab} \right|^{-1} \smash{ \iint_{\omega_{ab}} } f\left(x,y\right) \,dx \,dy $ be the average value of $f$ over $\omega_{ab}$, and $ \smash{ \overline{ f_{ab}^2 } } $ the average value of $f^2$.  Using the assumed smoothness of $f$ in a manner quantified by Proposition~1 in Appendix~\ref{app}, we substitute for $\var \bar A^*_{ab}$ and $\E \bar A^*_{ab}$ to obtain
\begin{align*}
\!\!\!\! &\operatorname{MISE}\bigl( \hat f^* \bigr)
\le \! \sum_{a,b=1}^k \! \iint_{\omega_{ab}} \!\! \left[ \vphantom{\frac{\bar{f}_{ab}-\rho_n\overline{f^2}_{ab}}{\rho_n h_{ab}^2}} \left| \left\{ f\left(x,y\right) \!-\! \bar f_{ab} \right\} \!+\! \left\{ \bar f_{ab} \!-\! \rho_n^{-1} \E\bar A^*_{ab} \right\} \right|^2 \right.
\\
& \qquad \qquad \qquad \,\, + \left. \frac{\bar{f}_{ab}-\rho_n\overline{f^2}_{ab}}{\rho_n h_{ab}^2} +
 \frac{M\left\{1+o(1)\right\}}{\rho_n h_{ab}^2 (2n)^{1/2}}+\frac{M^2}{2n} \right] \, dx \, dy\\
& \le \sum_{a,b=1}^k \left[  \iint_{\omega_{ab}} \!\!\!\!\!\left| f\left(x,y\right) - \bar f_{ab} \right|^2\,dx\,dy\!+ \left\{\frac{M^2\left\{1+o(1)\right\}}{2n}\right.\right.\\
& \qquad \qquad \qquad \,\, \left. + \frac{\bar{f}_{ab}-\rho_n\overline{f^2}_{ab}}{\rho_n h_{ab}^2} +
 \frac{M\left\{1+o(1)\right\}}{ (2n)^{1/2}} \frac{1}{\rho_n  h_{ab}^2}+\frac{M^2}{2n} \right\} \\
& \qquad \qquad \qquad \,\,\,\, \cdot \left. \frac{ \left\{h+r \I\left(a=k\right)\right\} \left\{h+r \I\left(b=k\right)\right\}}{n^2} \right] ,
\end{align*}
with $ \smash{ h_{ab}^2 = \sum_{i<j} \I\left(\tilde z_i=a\right) \I\left(\tilde z_j=b\right) }$.  Applying Lemma~1 in Appendix~\ref{app} to each $\iint_{\omega_{ab}} \left| f\left(x,y\right) - \bar f_{ab} \right|^2\,dx\,dy$,
\begin{align*}
\sum_{a,b=1}^k \iint_{\omega_{ab}} \left| f\left(x,y\right) - \bar f_{ab} \right|^2 \,dx \,dy \leq M^2 \cdot 2 \left( \tfrac{h}{n} \right)^{ 2 } \left\{ 1 + \mathcal{O}\left( \tfrac{h}{n} \right) \right\},
\end{align*}
with the $\mathcal{O}(h/n)$ term due to the grouping of size $h+r$. Using~\eqref{eq:f-norm},
\begin{multline*}
\sum_{a,b=1}^k \frac{\bar{f}_{ab}}{\rho_n h_{ab}^2 } \frac{ \left\{h+r \I\left(a=k\right)\right\} \left\{h+r \I\left(b=k\right)\right\}}{n^2}
\\ = \sum_{a,b=1}^k \frac{1}{\rho_n h_{ab}^2}\iint_{\omega_{ab}}  f\left(x,y\right) \,dx \,dy
= \frac{1}{\rho_n h^2} \left\{ 1 + o(1) \right\}.
\end{multline*}
Combining these simplifications yields the stated expression.
\end{proof}

This theorem informs the selection of a network histogram bandwidth $h$.  It quantifies how the oracle integrated mean square error depends on the smoothness of the graphon $f$, relative to the size and sparsity of the observed adjacency matrix $A$.  The theorem decomposes this error into three contributions: smoothing bias, which scales as $ M^2 \,(h/n)^2 $; resolution bias, which scales as $ M^2 / n $; and variance contributions, which scale as the inverse of the effective degrees of freedom $h^2 \rho_n$ of each bin.  As shown in~\cite{wolfe2013}, ensuring that $h^2 \rho_n$ grows faster than $ \log^3 n $ will enable consistent estimation of the graphon when $z$ is estimated according to~\eqref{eq:MPLE}; this accounts for the additional variance involved in estimating $z$ in the non-oracle setting.

Theorem~\ref{MISE} subsequently enables us to choose a bandwidth $h$ that respects the global properties of the network.  If we were to know $\rho_n$ and $M$, then the theorem provides directly for an oracle choice of bandwidth $h^*$ according to~\eqref{opt-block}.  From this expression we see that for the case of a dense network, with $ \rho_n \propto 1$, the oracle choice of bandwidth $h^*$ scales as $\sqrt{n}$.  More generally, we observe that as the sparsity of the network increases, $h^*$ must also increase, while as the gradient magnitude of the graphon increases, $h^*$ must decrease.  If $f$ is not differentiable but is still H\"older continuous, then the H\"older exponent will appear in the theorem expressions, leading to a smaller bandwidth for a given $n$ and $\rho_n$.

Finally, Theorem~\ref{MISE} provides for an upper bound on the oracle mean integrated square error when the network histogram bandwidth is set equal to $h^*$.  This bound reveals the best possible estimation performance we might achieve for given values of $n$, $\rho_n$, and $M$.

\subsection{Automatic bandwidth selection}

Theorem~\ref{MISE} is important for our theoretical understanding of the bandwidth selection problem, as it shows the tradeoffs between sparsity, smoothness, and sample size.  It suggests that $h$ should grow at a rate proportional to $ \smash{ \rho_n^{-1/4} \sqrt{n} } $, with $\rho_n$ estimated via~\eqref{eq:rhohat}, and with a constant of proportionality depending on the squared magnitude $M^2$ of the graphon gradient.

To estimate $M^2$ from $A$, we will form a simple one-dimensional approximation of the graphon $f$ using the vector $d$ of sorted degrees.  This yields a nonparametric estimator for what is referred to as the canonical version of $ \smash{ \int_0^1 f(x,y) \,dy }$~\cite{bickel2009nonparametric}.  Whenever the smoothness of this canonical marginal is equivalent to that of $f$, then this procedure yields a suitable estimator $\smash{ \widehat{M^2} }$ according to the steps below.  In some instances, however, the marginal may be smoother than $f$; for example, let $B(x)$ denote the distribution function of a $\operatorname{Beta}(a,b)$ random variable, and suppose $f(x,y) \propto B^{-1}(x) B^{-1}(y) + B^{-1}(1\!-\!x) B^{-1}(1\!-\!y)$.  Then the marginal is constant, but the corresponding $M^2$ (and indeed the H\"older regularity of $f$) will depend on $a$ and $b$.

To proceed, assume that the rows and column of $A$ have been re-ordered such that $ \smash{ d_i = \sum_{j\neq i} A_{ij} } $ is increasing with $i$.  Enumerating the sampled elements $f(\xi_i,\xi_j)$ of the graphon in a $n \times n$ matrix $F$ under this same re-ordering, we obtain in analogy to~\eqref{eq:hist-est} a rank-one estimate of the sampled graphon as $ \smash{ \hat F \propto \hat \rho_n^+ d d^T\! } $.  Minimizing the Frobenius norm $ \smash{ \|\hat F-\hat \rho_n^{+} A\| }$ then leads to the expression $ \smash{ \hat F = [ \{ (d^T\!d )^{+} \}^{2} \hat \rho_n^{+} d^T\! A d ] d d^T\! } $.

We then use $ \smash{ \hat F } $ to estimate the bandwidth $h$ as follows:
\vspace{-0.25\baselineskip}%
\begin{enumerate}
\item Compute the vector $d$ of degrees of $A$; sort its entries.
\item Estimate the slope of the ordered $d$ over indices $\lfloor n/2 \rfloor \pm \lfloor c\sqrt{n}\rfloor $ for some choice of $c$; normally $c=4$ is appropriate. Treating the ordered entries of $d$ near $ \lfloor n/2 \rfloor $ as a set of observations, fit a line with slope $m$ and intercept $b$ using the system of equations
\begin{equation*}
d_{ \lfloor n/2 \rfloor +j} = jm+b ,
\quad j =-\lfloor c\sqrt{n}\rfloor, -\lfloor c\sqrt{n}\rfloor + 1, \dots, \lfloor c\sqrt{n} \rfloor.
\end{equation*}
By the method of least squares, this yields estimates $\hat m$ and $\hat b$.
\item Define the vector-valued function of first differences
\begin{equation*}
\Delta f\left(x,y\right) =
\begin{pmatrix}
f\left(x,y\right)-f\left(x+\frac{1}{n+1},y \right) \\
f\left(x,y\right)-f\left(x,y+\frac{1}{n+1}\right)
\end{pmatrix},
\end{equation*}
leading to the following gradient estimate:
\begin{equation*}
\widehat{\Delta f} = [ \{ (d^T\!d )^{+} \}^{2} \hat \rho_n^{+} d^T\! A d ] \, \begin{pmatrix}
\hat m\hat b &
\hat m\hat b
\end{pmatrix}^T\!.
\end{equation*}
Via $\| \widehat{\Delta f} \|^2$, we estimate the average squared magnitude of $\Delta f$:
\begin{equation}
\widehat{M^2}
\label{eq:Mhatsqrd-est} =
2 n^2 \{ (d^T\!d )^{+} \}^{4} (\hat \rho_n^{+})^2 (d^T\! A d)^2 \hat m^2\hat b^2
\left\{ 1 + o(1) \right\} .
\end{equation}
\item Substituting $\widehat{M^2}$ into~\eqref{opt-block}, we obtain the bandwidth estimate
\begin{equation}
\label{eq:bw-rule}
\!\!\!\! \widehat{h^*}
=(2 \widehat{M^2} \hat\rho_n)^{-\frac{1}{4}} \sqrt{n}
= \bigl(2 \{ (d^T\!d )^{+} \}^{2} d^T\! A d \cdot \hat m\hat b \bigr)^{-\frac{1}{2}} \hat \rho_n^{\frac{1}{4}} .
\end{equation}
\end{enumerate}
Equipped with this rule of thumb for selecting the bandwidth $h$, we can now calculate the network histogram $ \smash{ \hat f\bigl(x,y; \widehat{h^*} \bigr) }$.\vspace{-0.75\baselineskip}%

\section{Data analysis using network histograms}
\label{sec:simexamples}

Data analysis software to calculate the network histogram is available at the site \url{https://github.com/p-wolfe/network-histogram-code}.

\subsection{Political weblog data}

To demonstrate the utility of the network histogram, we first analyze a well-studied dataset of political weblogs described in~\cite{adamic2005} and illustrated in Fig.\ \ref{fig:polblogsdata}.  This dataset was collected to quantify the degree of interaction between liberal and conservative blogs around the time of the 2004 US presidential election, and consists of a snapshot of nearly 1500 weblogs from February 8, 2005.  An edge is considered to be present between two blogs whenever at least one of the blogs' front page links to the other.

The relative sparsity of conservative--liberal blog linkages in this dataset is clearly apparent from Fig.\ \ref{fig:polblogsdata}. Thus it is often used to illustrate the notion of network community structure (see, e.g.,~\cite{newman2011communities}).  At the same time, Fig.\ \ref{fig:polblogsdata} also makes clear that the dataset exhibits additional heterogeneity not fully captured by a simple division of its weblogs into two communities, and indeed recent work also provides evidence of its additional block structure~\cite{krzakala2013spectral}.  Thus the network histogram provides a natural tool to explore the data.

\begin{figure}[t]
\hspace{2.15cm}\includegraphics[width=0.7\columnwidth,trim=50 0 0 50, clip]{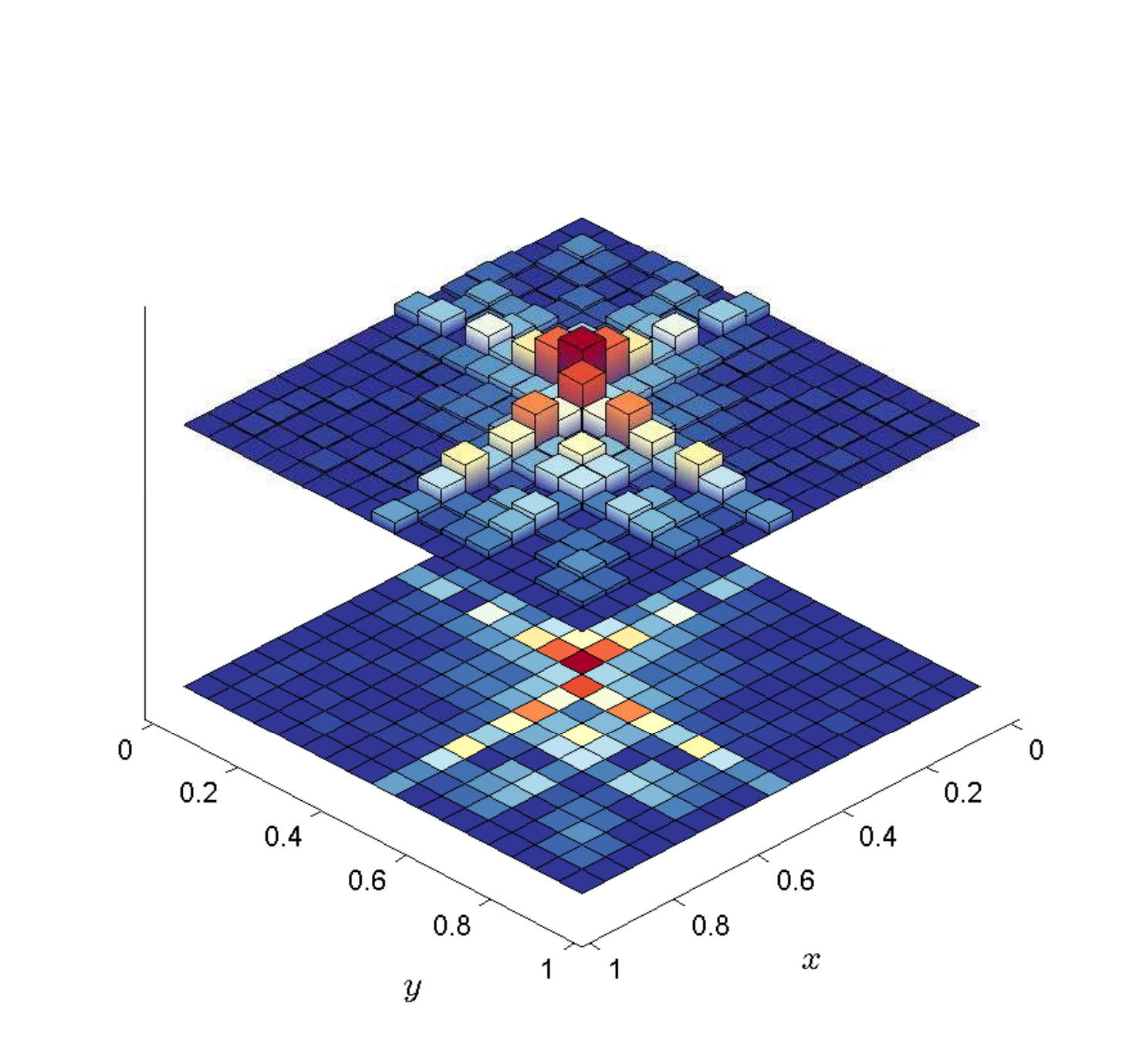}
\vspace{-0.75\baselineskip}%
\caption{\label{fig:polblogshist} Network histogram $\hat f\left(x,y\right)^{\frac{1}{2}}$ fitted to political weblog data. The square root stabilizes the variance of the bin heights and is solely for ease of visualization.}
\end{figure}

Figure~\ref{fig:polblogshist} shows a fitted histogram $\smash{ \hat f\left(x,y\right) } $ obtained from the $n = $ 1224 blogs with at least one link to another blog in the dataset.  From~\eqref{eq:Mhatsqrd-est} we obtained an estimate $ \smash{ \widehat{M^2} } $ in the range 1.1--1.25 for $c$ in the range 3--5, and so the estimated oracle error bound of~\eqref{eq:orac-mise-hstar} evaluates to approximately 1.8 $\times$ 10$^{-2}$.  The bandwidth $ \smash{ \widehat{h^*} } $ was then determined using~\eqref{eq:bw-rule}, and was found to evaluate to 72--74 for $c$ in the range 3--5.  We rounded this to $h = $ 72 to obtain the $ k = $ 17 equal-sized histogram bins that comprise Figs.\ \ref{fig:polblogshist} and \ref{fig:polblogsplits}.  The marginal edge probability estimator $ \hat \rho = \smash{ \sum_{i<j} A_{ij} / \binom{n}{2} } $ evaluates to 16,715 / 748,476 $=$ 2.2332 $\times$ 10$^{-2}$, implying that each off-diagonal histogram bin has approximately 116 effective degrees of freedom.

\begin{figure}[t]
\hspace{1.1cm}\includegraphics[width=0.8\columnwidth]{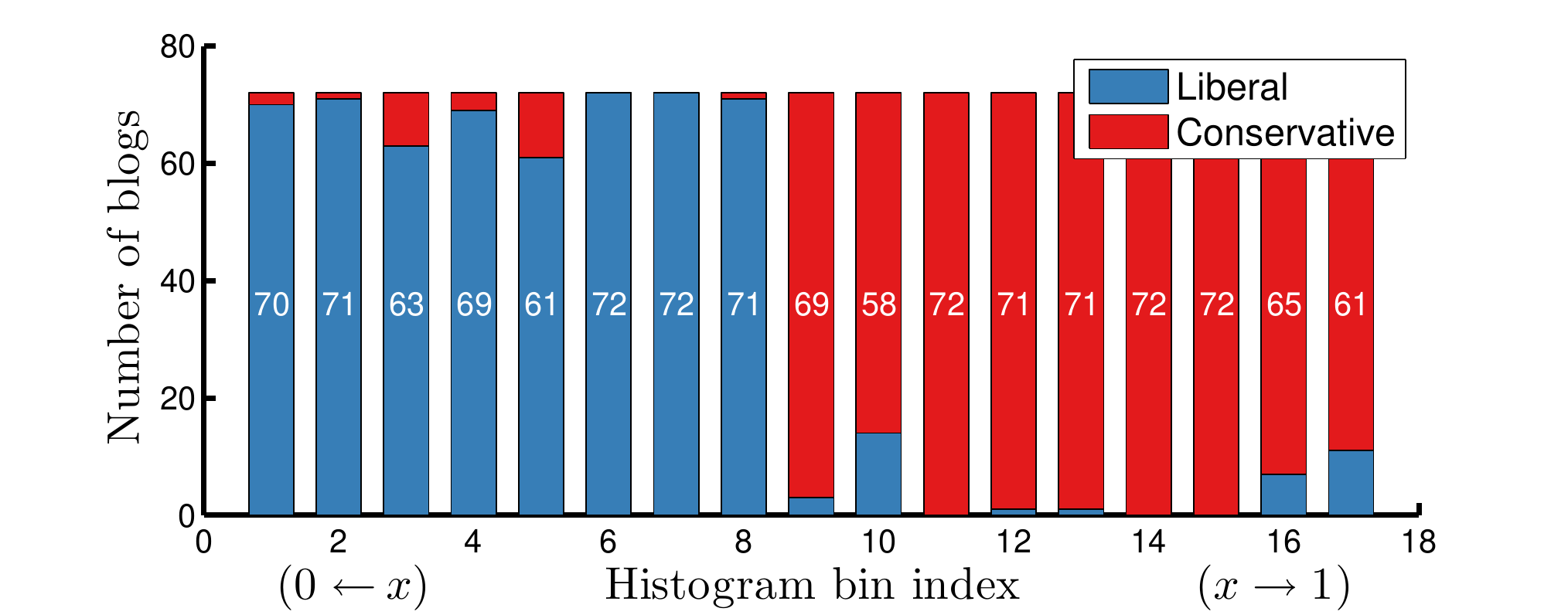}
\caption{\label{fig:polblogsplits} Political affiliation of weblogs within each fitted group, ordered relative to Fig.\ \ref{fig:polblogshist}.  Affiliation counts are shown in white, out of 72 blogs per group.  Nineteen of the 20 most influential liberal blogs identified by~\cite{adamic2005} are assigned to group 8, while 17 of the top 20 conservative blogs are assigned to group 9.}
\end{figure}

\begin{figure*}[t]
\hspace{-2.05cm}\includegraphics[width=1.33\columnwidth]{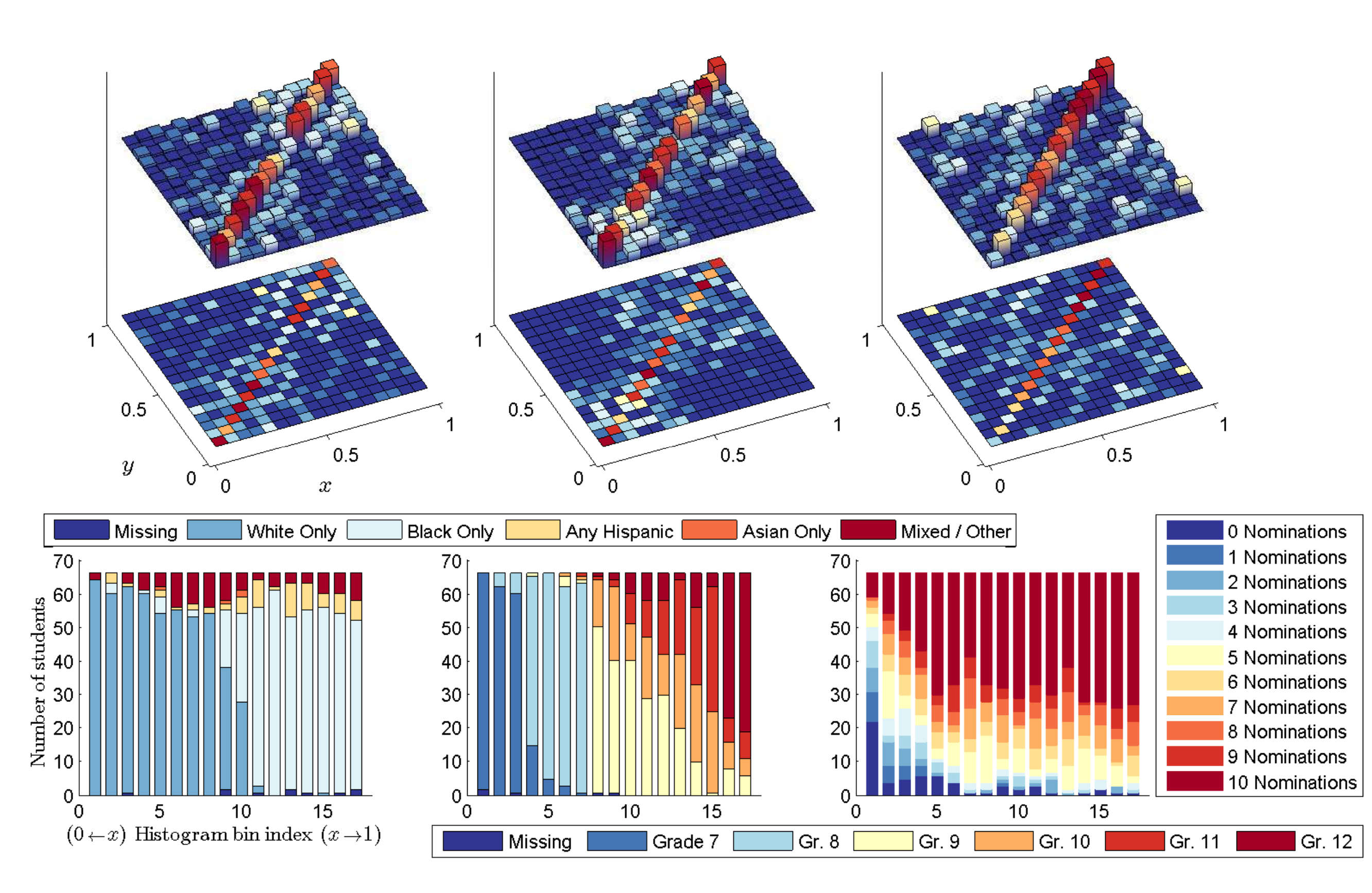}
\caption{\label{fig:addhealthsplit} Network histogram $\hat f\left(x,y\right)^{\frac{1}{2}}$ fitted to student friendship data (top row), with bins ordered according to mean covariate value for race (bottom left), school year (bottom center), and number of friend nominations (bottom right).  The histogram structure visible with respect to each of these three covariates is discussed in the text.}
\end{figure*}

Since exact maximization of the likelihood of~\eqref{eq:MPLE} is known to be computationally infeasible, we obtained the fit shown in Fig.\ \ref{fig:polblogshist} by implementing a simple stochastic search algorithm that swaps pairs and triples of node group memberships selected at random until a local optimum is reached in the likelihood of~\eqref{eq:MPLE}.  The log-likelihood of the data under the fitted model, normalized by the estimated effective degrees of freedom $ \binom{n}{2} \hat \rho $, is $-$2.8728.  To explore as full a range as possible of local likelihood optima, we started from several hundred random configurations, inspected the largest 5\% of returned local maxima, and then repeatedly re-optimized after randomly swapping up to 100 group membership pairs in the best returned solution.

The histogram bin index, relative to the $x$ and $y$ axes of Fig.\ \ref{fig:polblogshist}, allows comparison with the leftmost panel of Fig.\ \ref{fig:polblogsdata}.  Bin indices are arranged first by majority grouping---liberal or conservative---and then by the strength of each fitted group's cross-party connections.  Each node's political affiliation can be viewed as an observed binary covariate that partially explains the network structure.  Below we will consider the more general setting of multiple categorical covariates.

As summarized in Figs.\ \ref{fig:polblogshist} and \ref{fig:polblogsplits}, the coarsest feature of this network is its polarization into sets of dense linkages within the two political blocs of liberal and conservative ideologies.  We also observe from Fig.\ \ref{fig:polblogshist} that nearly 40\% of the histogram bins are empty, in keeping with the sparsity pattern of the data observed in the leftmost panel of Fig.\ \ref{fig:polblogsdata}.  The most densely connected groups of weblogs in both parties show considerable cross-party linkage structure.  This is apparent both from the center region of Fig.\ \ref{fig:polblogshist}, as well as the groupings of Fig.\ \ref{fig:polblogsplits}, in which the most influential blogs identified by~\cite{adamic2005} are seen to be placed in the center of the histogram.  Such features are examples of network microstructure, corresponding to variation at scales smaller than the large fractions of a network that would be captured by a blockmodel with a fixed number of groups.

\subsection{Student friendship data}\label{sec:examples2}

Network datasets often have additional covariates measured at nodes or edges.  To illustrate how to use such information to interpret network histograms, we analyze a student friendship network from the US National Longitudinal Study of Adolescent Health (Add Health)~\cite{resnick1997protecting}. As part of this study, students were asked to identify their gender, race, and school year (grades 7--12), and then to nominate up to 5 friends of each gender.  We consider an undirected version of the resulting network, with a link present whenever either of a pair of students has nominated the other.

We chose to analyze School 44 from the Add Health study, a relative large and racially diverse example among the over 80 schools for which data were collected~\cite{moody2001race}, and one that has been previously analyzed in~\cite{hunter2008goodness} using exponential random graph models. It comprises a main high school with grades 9--12 and a sister ``feeder'' school with grades 7 and 8. We removed 21 zero-degree nodes as well as 5 nodes corresponding to students for which any two of gender, grade, or race covariates were missing, yielding $ n = 1122 $ nodes.

To fit the histogram shown in Fig.\ \ref{fig:addhealthsplit}, we employed the same bandwidth selection procedure and optimization algorithm as above. This yielded a bandwidth $ \smash{ \widehat{h^*} } $ in the range 69--70 for $c$ in the range 3--5, which we rounded down to $h=$ 66 to obtain $k=$ 17 equal-sized histogram bins. This is sparser than the political weblog network considered above, but at the same time $ \smash{ \widehat{M^2} } $ evaluates to 3.2--3.5, indicating relatively less smoothness.  The estimated oracle error bound of~\eqref{eq:orac-mise-hstar} is then approximately 5.6 $\times$ 10$^{-2}$, and our fit yielded a normalized data log-likelihood of $-$4.1714.  For this example, the marginal edge probability estimator $ \hat \rho = \smash{ \sum_{i<j} A_{ij} / \binom{n}{2} } $ evaluates to 5,048 / 628,881 $=$ 8.0270 $\times$ 10$^{-3}$, implying that each off-diagonal histogram bin has approximately 35 effective degrees of freedom.

To explore the fitted groups, we ordered them post-hoc via the mean covariate value per bin for race (coded 0--5), grade (coded 6--12), and number of friends nominated (coded 0--10).  The resulting histograms are shown in the top row of Fig.\ \ref{fig:addhealthsplit}, while the bottom row shows the number of covariate categories comprising each bin.  In the leftmost column of Fig.\ \ref{fig:addhealthsplit}, we observe that the connectivity structure associated with race divides most of the white and black students into two separate groupings, with a decreased tendency to link across these categories.  In the middle column we observe a similar effect for grade, as well as an even stronger effect between the two separate schools: students in grades 7--8 have relatively few interactions with students in grades 9--12.  There is evidence for more mixing within the latter school, with the exception of grade 12, while in the former school the division between grades 7 and 8 is strong.  Finally, in the rightmost column of Fig.\ \ref{fig:addhealthsplit} we see a strong effect associated with the number of friends nominated, which serves as a rough proxy for the degree of each network node.  Diagonal bins in this histogram are ordered almost exclusively from smallest to largest, and we see none of the assortativity associated with race or grade that was so apparent in the previous histogram orderings.

From this example we conclude that the network histogram can provide not only an effective summary of network interactions, but one which is also interpretable in the context of additional covariate information.  This type of aggregate summary allows a fine-grained but concise view of adolescent student friendship networks, and suggests that aggregate statistics on race and grade within a particular school may not be sufficient to give a full picture of the reported social interactions amongst its students.

\section{Discussion}

We argue that the blockmodel is \emph{universal} as a tool for representing interactions in an unlabeled network.  As we use more blocks in our representation, we improve our approximation of the underlying data-generating mechanism, albeit at the cost of increasing complexity.  The results in this article give us insight into how to control the tradeoff between complexity and precision, leading to a flexible nonparametric summary of a network akin to an ordinary histogram.

There is a clear philosophical distinction between the network histogram and the stochastic blockmodel.  The network histogram yields a nonparametric summary of link densities across a network.  In contrast, the stochastic blockmodel was originally conceived as a generative statistical model, meaning that it is typically analyzed in settings where it is presumed to be correctly specified as the data-generating mechanism.  We have instead shown how it can be useful in the case when the blockmodel serves simply to approximate the generating mechanism of the network---a much milder assumption.

To make the network histogram into a useful practical tool, we have derived a procedure for automatically selecting an analysis bandwidth under the assumption of a smooth (H\"older continuous) graphon.  If the graphon has finitely many discontinuities parallel to its $x$- and $y$-axes, for example if it corresponds to an actual blockmodel, then good estimation properties can still be achieved, in analogy to ordinary histogram estimates~\cite{van1985mean}.  In such scenarios the rates at which estimation errors decay are not yet established; indeed, exploring different graphon smoothness classes, and the networks they give rise too, remains an important avenue of future investigation.

As a final point, networks are rarely explored in the absence of other data. A network histogram is defined only up to permutation of its bins, and so to aid in its interpretation we may use other observed variables, labels, or covariates to inform our choice of bin ordering.  As our second data analysis example has shown in the context of student friendship networks, multiple representations can be useful in different ways, and more than one such visual representation can yield insight into the generating mechanism of the network.  In this way the universality of the blockmodel representation is a key piece in the puzzle of general network understanding.  Our results suggest a fundamental re-think of the interpretation of network communities, in light of the fact that many different community assignments can all give an equally valid representation of the network.

\section*{Acknowledgments}

Work supported in part by the US Army Research Office under PECASE Award W911NF-09-1-0555 and MURI Award W911NF-11-1-0036; by the US Office of Naval Research under Award N00014-14-1-0819; by the UK EPSRC under Mathematical Sciences Leadership Fellowship EP/I005250/1, Established Career Fellowship EP/K005413/1 and Developing Leaders Award EP/L001519/1; by the UK Royal Society under a Wolfson Research Merit Award; and by Marie Curie FP7 Integration Grant PCIG12-GA-2012-334622 within the 7th European Union Framework Program.

\appendix

\section{Auxiliary results for the proof of Theorem 1}\label{app}

Throughout we assume that $f$ is a symmetric function on $(0,1)^2$ that is also $\alpha$-H\"older continuous for some $0 < \alpha \leq 1$, with $f \in \operatorname{\textrm{H\"older}}^\alpha(M)$ meaning that
\begin{equation*}
\sup_{(x,y) \neq (x',y') \in (0,1)^2} \frac{ \left| f\left(x,y\right) - f\left(x',y'\right) \right| }{ \left| \left(x,y\right) - \left(x',y'\right) \right|^\alpha } \leq M < \infty ,
\end{equation*}
where $\left|\cdot\right|$ is the Euclidean metric on ${\mathbb{R}}^2$.

We also define a set of summation indices $R_{ab}$, which is the range of values of $i < j$ over which one must aggregate $A_{ij}$ to retrieve $\bar A^*_{ab}$. We write
\begin{align*}
h_a & :=h \I(a<k)+(h+r)\I(a=k);
\\
h^2_{ab} & := \left|R_{ab}\right|=\begin{cases}
h^2 & \mathrm{if} \quad 1\le a<b<k,\\
\tbinom{h}{2} & \mathrm{if} \quad 1\le a=b<k,\\
h \cdot (h+r)& \mathrm{if} \quad 1\le a<b=k,\\
\tbinom{h+r}{2} & \mathrm{if}\quad  a=b=k.\\
\end{cases}
\end{align*}

\begin{proposition}[Moments of $\bar A^*_{ab} $]\label{momlabel2}
Let $f\in\operatorname{\textrm{H\"older}}^\alpha(M)$ be symmetric on $(0,1)^2$, and let the labeling $\tilde z_i$ be determined from the latent vector $\xi$ by
\begin{equation*}
\tilde z_i=\min\left\{\lceil \left(i\right)^{-1}/h \rceil,k\right\},
\end{equation*}
where $(i)^{-1}$ is the rank of $\xi_i$ from smallest to largest. Thus $(i)$ is defined as the index chosen so that $\xi_{(1)}\le \xi_{(2)}\le \dots \le \xi_{(n)}$, and $(i)^{-1}$ is its inverse function.

Assign $i_n=i/(n+1)$ for $i=1,\dots,n$, and define the oracle estimator of $f\left(x,y\right)$ based on knowledge of $\xi$ in terms of the quantities
\begin{align}
\nonumber
\bar A^*_{ab} &=\frac{\sum_{i<j} A_{ij}\I\left(\tilde z_i=a \right)\I\left(\tilde z_j=b \right)}{\sum_{i<j}\I\left(\tilde z_i=a \right)\I\left(\tilde z_j=b \right)}, \quad 1 \leq a, b \leq k.
\end{align}
With these definitions, the means and variances of each oracle estimator component $\bar A^*_{ab} $ satisfy the following:
\begin{align*}
\left| \E \bar A^*_{ab} - \rho_n \bar f_{ab} \right|&\le \rho_n M \left(2 n\right)^{-\alpha/2}\left\{1+o(1)\right\},
\\
\left|\var \bar A^*_{ab} -\frac{\rho_n\bar{f}_{ab}-\rho_n^2\overline{f^2}_{ab}}{h_{ab}^2}\right|&\le
\rho_n \frac{M}{h_{ab}^2 (2n)^{\alpha/2}}\left\{1+o(1)\right\}+\rho_n^2M^2\left(2n\right)^{-\alpha};
\end{align*}
where $\bar{f}_{ab}$ and $\overline{f^2}_{ab}$ are defined by
\begin{align*}
\bar{f}_{ab}&=\frac{1}{\left|\omega_{ab} \right|}\iint_{\omega_{ab}}f(x,y)\,dx\,dy, \quad \overline{f^2}_{ab}=\frac{1}{\left|\omega_{ab} \right|}\iint_{\omega_{ab}}f^2(x,y)\,dx\,dy;
\end{align*}
and the region $\omega_{ab} $ is given by
\begin{equation}
\label{omegaab}
\omega_{ab}=\begin{cases}\left[\left(a-1\right)h/n,a h/n\right]\times \left[\left(b-1\right)h/n,b h/n\right] &\text{if $a<k$ and $b<k$,}\\
\left[\left(k-1\right)h/n, 1\right]\times \left[\left(b-1\right)h/n,b h/n\right] & \text{if $a=k$ and $b<k$,} \\
\left[\left(b-1\right)h/n,b h/n\right]\times \left[\left(k-1\right)h/n, 1\right]  &\text{if $a <k$ and $b=k$,}\\
\left[\left(k-1\right)h/n, 1\right] \times \left[\left(k-1\right)h/n, 1\right] &\text{if $a=k$ and $b=k$.}
\end{cases}
\end{equation}
\end{proposition}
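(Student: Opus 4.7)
The approach is to condition on the latent vector $\xi$, exploit the conditional Bernoulli structure of the $A_{ij}$, and then use concentration properties of uniform order statistics together with H\"older continuity of $f$ to control the residual randomness in $\xi$. Since $\tilde z$ is a deterministic function of $\xi$, given $\xi$ the variables $\{A_{ij}\}_{i<j}$ are independent Bernoulli$(\rho_n f(\xi_i,\xi_j))$, and $\bar A^*_{ab}$ is a uniform average of $h_{ab}^2$ such variables. A crucial observation is that the index set $\{i:\tilde z_i = a\}$ corresponds precisely to the order statistics of $\xi$ of ranks $(a-1)h+1,\ldots,a h$ (with an additional tail for $a=k$), so that every $f(\xi_i,\xi_j)$ appearing in $\bar A^*_{ab}$ equals some $f(\xi_{(m)},\xi_{(l)})$ with $(m,l)$ in prescribed rank ranges.

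For the mean bound, I would start from $\E\bar A^*_{ab} = (\rho_n/h_{ab}^2)\sum_{(m,l)}\E f(\xi_{(m)},\xi_{(l)})$ and use the uniform order statistic moments $\E\xi_{(m)} = m/(n+1)$ and $\var\xi_{(m)} \le 1/\{4(n+2)\}$. Applying H\"older continuity and Jensen's inequality to the concave map $t\mapsto t^{\alpha/2}$ (valid for $0<\alpha\le 1$) yields
\begin{equation*}
\bigl| \E f(\xi_{(m)},\xi_{(l)}) - f\bigl(\tfrac{m}{n+1},\tfrac{l}{n+1}\bigr) \bigr| \le M\,(2n)^{-\alpha/2}\{1+o(1)\}.
\end{equation*}
Averaging over the block, the resulting grid sum approximates $\bar f_{ab}$ up to a Riemann discretization error of order $M n^{-\alpha}$, which is absorbed into the $\{1+o(1)\}$ factor, giving the stated bias bound.

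For the variance, I would apply the law of total variance $\var\bar A^*_{ab} = \E[\var(\bar A^*_{ab}\mid\xi)] + \var(\E[\bar A^*_{ab}\mid\xi])$. The conditional variance equals $h_{ab}^{-4}\sum \rho_n f(\xi_i,\xi_j)\{1-\rho_n f(\xi_i,\xi_j)\}$; taking expectations and re-running the mean analysis on both $\E f$ and $\E f^2$ produces the leading term $(\rho_n\bar f_{ab}-\rho_n^2\overline{f^2}_{ab})/h_{ab}^2$ with residual of order $\rho_n M/\{h_{ab}^2 (2n)^{\alpha/2}\}$. For the second term, $\E[\bar A^*_{ab}\mid\xi] = (\rho_n/h_{ab}^2)\sum f(\xi_{(m)},\xi_{(l)})$ is H\"older in $\xi$; a pointwise H\"older bound on each summand, combined with the order statistic deviations $|\xi_{(m)} - m/(n+1)|$, yields $\var(\E[\bar A^*_{ab}\mid\xi]) \le \rho_n^2 M^2 (2n)^{-\alpha}$. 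Summing the two contributions gives the claimed variance bound.

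The main obstacle is the bookkeeping around the boundary block when $a$ or $b$ equals $k$ (of width $h+r$ rather than $h$) and verifying that the order-statistic fluctuations and Riemann-sum discretization errors combine with exactly the constants $M(2n)^{-\alpha/2}$ and $M^2(2n)^{-\alpha}$ claimed. The conceptual ingredients---conditional Bernoulli structure, order statistic concentration, and H\"older continuity---are each elementary; the technical work lies in threading them through carefully enough that the $\{1+o(1)\}$ remainder absorbs the Riemann error and the boundary effects without degrading the leading constant.
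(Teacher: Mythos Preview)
Your proposal is correct and follows essentially the same approach as the paper: condition on $\xi$, use the Bernoulli structure, control $\E f(\xi_{(m)},\xi_{(l)})$ via H\"older continuity and the order-statistic moment bound $\var\xi_{(m)}\le 1/\{4(n+2)\}$ with Jensen, and pass from the grid average to $\bar f_{ab}$ by a Riemann-sum argument (the paper packages these as Lemmas~2--5). The only cosmetic difference is in the variance: the paper expands $\var\bar A^*_{ab}$ as a double sum of $\cov\{A_{(i)(j)},A_{(m)(l)}\}$ and bounds each off-diagonal covariance by $\rho_n^2 M^2\{2(n+2)\}^{-\alpha}$ via Cauchy--Schwarz, whereas you use the law of total variance and bound $\var_\xi(\E[\bar A^*_{ab}\mid\xi])$ in one stroke---the two decompositions are algebraically identical.
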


\begin{proof}
Note that the oracle sample proportion estimator takes the form
\begin{align*}
\bar A^*_{ab} &=\frac{\sum_{i<j} A_{ij}\I\left(\tilde z_i=a \right)\I\left(\tilde z_j=b \right)}{\sum_{i<j}\I\left(\tilde z_i=a \right)\I\left(\tilde z_j=b \right)}\\
&=\begin{cases}
\frac{\textstyle\sum_{ j =  h (b-1)+1 }^{  h b } \sum_{ i =  h (a-1)+1}^{  h a \I\left(a\neq b\right)+(j-1)\I\left(a=b\right)}
A_{(i)(j)}}{  h_{ab}^2} & \text{if $a<k$ and $b<k$,} \\
\frac{\textstyle\sum_{ j =  h (k-1)+1 }^{  n} \sum_{ i =  h (a-1)+1}^{  h a \I\left(a\neq b\right)+(j-1)\I\left(a=b\right)}
A_{(i)(j)}}{  h_{a k}^2} & \text{if $a \leq k$ and $b=k$,} \\
\bar A^*_{b k} & \text{if $a = k$ and $b \leq k$;}
\end{cases}\\
&=\frac{\sum_{(i,j)\in R_{ab}} A_{(i)(j)}}{h_{ab}^2},
\end{align*}
where $R_{ab}$ is defined implicitly to make the summation valid, and is non-random.
Thus we may conclude that
\begin{equation}
\label{Ebar1}
\E \bar A^*_{ab}
= \frac{ 1 }{ h_{ab}^2 } \sum_{(i,j)\in R_{ab}} \E A_{ (i) (j) } .
\end{equation}
We define $ \tilde f_{ a b } $, for $i_n=i/(n+1)$ and $j_n=j/(n+1)$, as
\begin{equation}
\label{ftilde}
\tilde f_{ a b } = \frac{ 1 }{ h_{ab}^2 } \sum_{ (i,j)\in R_{ab}} f\left(i_n,
j_n\right) .
\end{equation}
We then use~\eqref{exp1} from Lemma~\ref{momlabel} to obtain that
\begin{align}
\label{exp2}
\left|\E  \bar A^*_{ ab  } -\rho_n \tilde f_{ a b }\right|\le \rho_n M\left\{2(n+2) \right\}^{-\alpha/2}.
\end{align}
We note from Lemma~\ref{quadmomlabel1}
that as $f\in\operatorname{\textrm{H\"older}}^\alpha(M)$ on $(0,1)^2$,
\begin{equation}
\label{ftildebar}
|\tilde{f}_{ab}-\bar{f}_{ab} |<M \, 2^{\alpha/2}n^{-\alpha}\left\{1+2^\alpha \I\left(a=b\right)\right\}.
\end{equation}
We then apply the triangle inequality to~\eqref{Ebar1}--\eqref{ftildebar} to derive
\begin{align*}
\left| \E \bar A^*_{ab} - \rho_n \bar f_{ab} \right|
 & \le \rho_n M \left[\left\{ 2(n+2) \right\}^{ - \alpha / 2 }+2^{\alpha/2}n^{-\alpha}\left\{1+2^\alpha \I\left(a=b\right)\right\}\right]\\
&\le \rho_n M (2n)^{-\alpha/2}\left\{1+o(1)\right\}.
\end{align*}

This establishes the form of  $\E \bar A^*_{ab} $.  We next calculate
\begin{align*}
\var \bar A^*_{ab} =\frac{\sum_{(i,j)\in R_{ab}} \sum_{(m,l)\in R_{ab}} \cov\left\{ A_{(i)(j)},A_{(m)(l)}\right\}}{h_{ab}^4}.
\end{align*}
Referring to~\eqref{var1} of Lemma~\ref{momlabel},
\begin{align*}
\var \bar A^*_{ab} &=\frac{1}{h_{ab}^4}\sum_{(i,j)\in R_{ab}} \sum_{(m,l)\in R_{ab}} \cov\{A_{ (i) (j) },A_{ (m) (l) }\}
\\ & \le \frac{1}{h_{ab}^4}\sum_{(i,j)\in R_{ab}}
\rho_n f\left(i_n,j_n\right)\left\{1-\rho_n f\left(i_n,j_n\right)\right\} +
\rho_n^2M^2\left[2(n+2)\right]^{-\alpha} \\
&\qquad+\frac{\rho_n}{h_{ab}^2}M \left\{2(n+2) \right\}^{-\alpha/2}\left[1+\rho_n M \left\{2(n+2) \right\}^{-\alpha/2}\right].
\end{align*}
We may likewise determine the lower bound of
\begin{align*}
\var \bar A^*_{ab} & \ge \frac{1}{h_{ab}^4}\sum_{(i,j)\in R_{ab}}
\rho_n f\left(i_n,j_n\right)\left\{1-\rho_n f\left(i_n,j_n\right)\right\} -
\rho_n^2M^2\left[2(n+2)\right]^{-\alpha} \\
&\qquad -\frac{\rho_n}{h_{ab}^2}M \left\{2(n+2) \right\}^{-\alpha/2}\left[1+\rho_n M \left\{2(n+2) \right\}^{-\alpha/2}\right].
\end{align*}
From Lemmas~\ref{quadmomlabel1} and~\ref{quadmomlabel} below, writing $\overline{f^2}_{ab}$ for the normalized integral of $f^2(x,y)$ over the block $\omega_{ab}$, we have respectively that
\begin{equation*}
\left|\tilde f_{ a b }-\bar{f}_{ab}\right|\le
\frac{M2^{\alpha/2}}{n^\alpha}\left\{1+2^\alpha\I\left(a=b\right)\right\}
\end{equation*}
and
\begin{equation*}
\left|\frac{1}{h_{ab}^2}\sum_{(i,j)\in R_{ab}} f^2\left(i_n,j_n\right)-\overline{f^2}_{ab}\right|\le \frac{2\|f\|_{\infty} M 2^{\alpha/2}}{n^{\alpha}}\left\{1+2^\alpha\I(a=b)
\right\}.
\end{equation*}
Together these results yield the claimed expression for the variance of $\bar A^*_{ab}$.
\end{proof}

\begin{lemma}\label{normbound2}
Let $ f \in\operatorname{\textrm{H\"older}}^\alpha(M)$, with $\bar f_{ab}=\left|\omega_{ab} \right|^{-1}\iint_{\omega_{ab}}f(x,y)\,dx\,dy$ defined as its local average over $\omega_{ab}$. Then
\begin{equation*}
\frac{1}{\left| \omega_{ab}\right|}\iint_{\omega_{ab}} \left| f\left(x,y\right) - \bar f_{ab} \right|^2\,dx\,dy\le
M^2 2^{\alpha} (h/n)^{2\alpha}\left\{1+2^{2\alpha}\I\left(a=k\; {\mathrm{or}}\;b=k\right)\right\}.
\end{equation*}
\end{lemma}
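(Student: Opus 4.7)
The strategy is a direct application of Jensen's inequality to the definition of $\bar f_{ab}$, followed by invocation of the H\"older hypothesis on $f$ and an elementary geometric estimate of the diameter of $\omega_{ab}$. Since $\bar f_{ab}$ is by construction the mean of $f$ over $\omega_{ab}$, I would first write the deviation as a double average,
\[
f(x,y) - \bar f_{ab} = \frac{1}{|\omega_{ab}|}\iint_{\omega_{ab}} \bigl\{ f(x,y) - f(x',y') \bigr\}\,dx'\,dy',
\]
and apply Jensen's inequality to the convex map $t \mapsto t^2$, yielding the pointwise bound
\[
\bigl| f(x,y) - \bar f_{ab} \bigr|^2 \le \frac{1}{|\omega_{ab}|}\iint_{\omega_{ab}} \bigl| f(x,y) - f(x',y') \bigr|^2 dx'\,dy'.
\]

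Next I would invoke H\"older continuity: for $(x,y),(x',y') \in \omega_{ab}$ the integrand is bounded by $M^2\,|(x,y) - (x',y')|^{2\alpha} \le M^2 D_{ab}^{2\alpha}$, where $D_{ab}$ denotes the Euclidean diameter of $\omega_{ab}$. Substituting this uniform estimate and carrying the average out of the integral gives $|f(x,y) - \bar f_{ab}|^2 \le M^2 D_{ab}^{2\alpha}$ throughout $\omega_{ab}$. Averaging once more in $(x,y)$ against $|\omega_{ab}|^{-1}\,dx\,dy$ leaves the right-hand side unchanged, so that the lemma reduces to the pure geometric task of bounding $D_{ab}^{2\alpha}$.

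It remains to evaluate $D_{ab}$ in each of the four cases of \eqref{omegaab}. When $a < k$ and $b < k$, the block $\omega_{ab}$ is a square of side $h/n$, giving $D_{ab} = \sqrt{2}\,(h/n)$ and $D_{ab}^{2\alpha} = 2^\alpha (h/n)^{2\alpha}$; this already matches the interior term $1$ inside the braces of the claimed bound. When $a = k$ or $b = k$, at least one side of $\omega_{ab}$ equals $(h+r)/n$. Since $r = n \bmod h$ satisfies $r \le h-1 < h$, we have $(h+r)/n < 2h/n$, and the worst case (both $a = k$ and $b = k$) yields $D_{ab} = \sqrt{2}\,(h+r)/n < 2\sqrt{2}\,(h/n)$. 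Raising to the $2\alpha$ power, $D_{ab}^{2\alpha} \le 2^{3\alpha}(h/n)^{2\alpha} = 2^\alpha (h/n)^{2\alpha}\cdot 2^{2\alpha}$, which is exactly the extra term captured by the indicator in the stated bound.

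Combining the interior and remainder cases, $D_{ab}^{2\alpha} \le 2^\alpha (h/n)^{2\alpha}\bigl\{ 1 + 2^{2\alpha}\I(a = k \text{ or } b = k) \bigr\}$, and multiplying by $M^2$ finishes the argument. No substantive obstacle is anticipated: the only step requiring care is the bookkeeping in the diameter comparison when $a$ or $b$ equals $k$, where one must use $r < h$ to obtain the factor $2^{2\alpha}$ cleanly; everything else is automatic from Jensen and H\"older.
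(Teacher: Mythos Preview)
Your proposal is correct and follows essentially the same route as the paper's proof: bound $|f(x,y)-\bar f_{ab}|$ via the integral representation of $\bar f_{ab}$, invoke H\"older continuity with the Euclidean diameter of $\omega_{ab}$, and handle the boundary block using $r<h$ so that $(h+r)/n<2h/n$. The only cosmetic difference is that the paper applies the triangle inequality first and then squares, whereas you apply Jensen to $t\mapsto t^2$ first; since both arguments immediately pass to the uniform diameter bound, the end result and constants are identical.
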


\begin{proof}
Recall that $ \omega_{ab} $ is given by~\eqref{omegaab},
as before.  Note from the definition of the set $\operatorname{\textrm{H\"older}}^\alpha(M)$ that
if $(x,y)\in \omega_{ab} $  and $a,b<k$, then
\begin{align*}
\left| \bar f_{ab} - f\left(x,y\right) \right| &= \left| \frac{1}{\left| \omega_{ab}\right|}\iint_{\omega_{ab}}
f\left(x',y'\right)\,dx'\,dy'-f(x,y) \right|\\
\Rightarrow
\left| \bar f_{ab} - f\left(x,y\right) \right| &\le  \frac{1}{\left| \omega_{ab}\right|}\iint_{\omega_{ab}}
\left|f\left(x',y'\right)-f(x,y) \right|\,dx'\,dy'\\
&\le \frac{1}{\left| \omega_{ab}\right|}\iint_{\omega_{ab}}
 M \left|\left(x',y'\right)-(x,y) \right|^{\alpha}\,dx'\,dy'\\
&\le  \frac{1}{\left| \omega_{ab}\right|}\iint_{\omega_{ab}}
 M \left[2 (h/n)^2 \right]^{\alpha/2}\,dx'\,dy'=M 2^{\alpha/2} (h/n)^{\alpha}.\end{align*}
Thus
\begin{align*}
\left| \bar f_{ab} - f\left(x,y\right) \right|^2 &\le  M^2 2^{\alpha} (h/n)^{2\alpha}\\
\Rightarrow
 \frac{1}{\left| \omega_{ab}\right|}\iint_{\omega_{ab}} \left| f\left(x,y\right) - \bar f_{ab} \right|^2\,dx\,dy& \le M^2 2^{\alpha} (h/n)^{2\alpha}.
\end{align*}
If $a=k$ or $b=k$ then we replace $h$ by $2h$ to obtain a bound.
\end{proof}
Lemma~\ref{normbound2} has been adapted from Wolfe and Olhede~[16].

\begin{lemma}[Moments of $A_{(i)(j)}$]\label{momlabel}
Let $i_n=i/(n+1)$ for $i=1,\dots,n$, and let $(i)$ be defined as the index chosen so that $\xi_{(1)}\le \xi_{(2)}\le \dots \le \xi_{(n)}$.  Then the means and variances of each $A_{ (i) (j) }$ for $i<j$ satisfy the following:
\begin{align}
\label{exp1}
\left|\E A_{ (i) (j) }-\rho_n f\left(i_n,j_n \right)\right| & \le \rho_n M\left\{2(n+2) \right\}^{-\alpha/2},\\
\label{var1}
\left|\var A_{ (i) (j) }-\rho_n f\left(i_n,j_n\right)\left(1-\rho_n f\left(i_n,j_n\right)\right) \right| & \leq \rho_n
\cdot M \left\{2(n+2) \right\}^{-\alpha/2}\\
\nonumber
&\qquad \cdot \left[1+\rho_n M \left\{2(n+2) \right\}^{-\alpha/2}\right]\!.\\
\nonumber
\text{For $i\neq m$  or $j\neq l$,} \qquad \cov\left\{A_{ (i) (j) } ,A_{(m) (l)}\right\} & \le \rho_n^2M^2\left\{2(n+2)\right\}^{-\alpha}.
\end{align}
\end{lemma}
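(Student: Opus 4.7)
The whole lemma reduces to conditioning on $\xi$: by~\eqref{eqn:Aij} and the fact that the rank permutation $(\cdot)$ is a measurable function of $\xi$, we have $A_{(i)(j)}\mid\xi \sim \operatorname{Bernoulli}(\rho_n f(\xi_{(i)},\xi_{(j)}))$, and any two such Bernoullis with distinct unordered index pairs are conditionally independent given $\xi$. The two external inputs I would use are (a) the H\"older bound $|f(\xi_{(i)},\xi_{(j)})-f(i_n,j_n)| \le M[(\xi_{(i)}-i_n)^2+(\xi_{(j)}-j_n)^2]^{\alpha/2}$, and (b) the classical fact that $\xi_{(i)} \sim \operatorname{Beta}(i,n-i+1)$, giving $\E\xi_{(i)}=i/(n+1)=i_n$ and $\var\xi_{(i)} = i(n-i+1)/[(n+1)^2(n+2)] \le 1/[4(n+2)]$ uniformly in $i$.

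For~\eqref{exp1}, the tower rule yields $\E A_{(i)(j)} - \rho_n f(i_n,j_n) = \rho_n\,\E\{f(\xi_{(i)},\xi_{(j)})-f(i_n,j_n)\}$; combining (a) with Jensen's inequality---noting that $t\mapsto t^{\alpha/2}$ is concave on $[0,\infty)$ because $\alpha\le 1$---then gives $|\E A_{(i)(j)} - \rho_n f(i_n,j_n)| \le \rho_n M\{\E[(\xi_{(i)}-i_n)^2+(\xi_{(j)}-j_n)^2]\}^{\alpha/2}$, which by (b) is at most $\rho_n M\{2(n+2)\}^{-\alpha/2}$.

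For~\eqref{var1}, I use that $A_{(i)(j)}$ is $\{0,1\}$-valued, so $\var A_{(i)(j)} = p(1-p)$ with $p:=\E A_{(i)(j)}$. Setting $q:=\rho_n f(i_n,j_n)$, I factor $p(1-p)-q(1-q) = (p-q)(1-p-q)$ and bound $|1-p-q|\le|1-2q|+|p-q|\le 1+|p-q|$ (since $q\in[0,1]$ implies $|1-2q|\le 1$). This yields $|\var A_{(i)(j)} - q(1-q)|\le|p-q|(1+|p-q|)$, and substituting the preceding expectation bound for $|p-q|$ produces~\eqref{var1} in the stated form. For the covariance, conditional independence collapses $\E[A_{(i)(j)}A_{(m)(l)}\mid\xi]$ into a product, so $\cov(A_{(i)(j)},A_{(m)(l)}) = \rho_n^2\,\cov(X,Y)$ with $X:=f(\xi_{(i)},\xi_{(j)})$ and $Y:=f(\xi_{(m)},\xi_{(l)})$; Cauchy--Schwarz then gives $|\cov(X,Y)|\le\sqrt{\var X\,\var Y}$, and each variance can be bounded via $\var X \le \E(X-f(i_n,j_n))^2$ together with (a) and Jensen applied now to $t\mapsto t^{\alpha}$ (concave since $\alpha\le 1$), producing $\var X \le M^2\{2(n+2)\}^{-\alpha}$; multiplication delivers $\rho_n^2 M^2\{2(n+2)\}^{-\alpha}$.

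The main obstacle is not depth but bookkeeping: one has to feed Jensen the right exponent ($\alpha/2$ for the mean, $\alpha$ for the variance and covariance), notice that the uniform bound $\var\xi_{(i)}\le 1/[4(n+2)]$ is what removes all $i,j$-dependence from the final expressions, and spot the elementary identity $p(1-p)-q(1-q)=(p-q)(1-p-q)$ together with ``$|1-2q|\le 1$'' as the source of the precise factor $1+|p-q|$ appearing in~\eqref{var1}.
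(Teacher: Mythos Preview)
Your proof is correct and follows essentially the same architecture as the paper: condition on $\xi$, apply the tower rule / law of total variance / law of total covariance, then combine the H\"older bound with Jensen's inequality and the uniform bound $\var\xi_{(i)}\le 1/\{4(n+2)\}$ for Beta order statistics. Your treatment is in fact slightly cleaner in two places---you obtain $\var A_{(i)(j)}=p(1-p)$ directly from $A$ being $\{0,1\}$-valued and then factor $p(1-p)-q(1-q)=(p-q)(1-p-q)$, whereas the paper expands $\rho_n(f\pm\delta)(1-\rho_n f\pm\rho_n\delta)$; and for the covariance you apply Cauchy--Schwarz directly to $\cov_\xi(X,Y)$ together with $\var X\le \E(X-c)^2$, while the paper routes through the shift identity $\cov(X,Y)=\E[(X-a)(Y-b)]-(\E X-a)(\E Y-b)$ before applying Cauchy--Schwarz---but these are cosmetic differences and yield the same constants.
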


\begin{proof}
Equation~\eqref{exp1} follows directly from the law of iterated expectation, with the first calculation following from conditioning on $\xi$:
\begin{align}
\E A_{ (i) (j) }&=\E_\xi\left[\E_{A|\xi}\left\{ A_{ (i) (j) }\,\vert\,\xi\right\}\right]=\E_\xi\left\{\rho_n f\left(\xi_{(i)},\xi_{(j)}\right)\right\},
\label{Aij}
\end{align}
and the second calculation following by approximation of the latter expectation, as we now show. As $|\cdot |$ is convex, Jensen's inequality permits us to deduce that
\begin{equation}
\label{exp-delta}
\left|\E_\xi \rho_n f\left(\xi_{(i)},\xi_{(j)}\right)-\rho_nf\left(i_n,j_n\right)\right|\le \rho_n\E_\xi\left\{ \left|f\left(\xi_{(i)},\xi_{(j)}\right)-f\left(i_n,j_n\right)\right|\right\}.
\end{equation}
We note that from Lemma~\ref{expdev}, we have
\begin{align}
\label{expxi}
\E_\xi \left|f\left(\xi_{(i)},\xi_{(j)}\right)-f\left(i_n,j_n\right)\right| \leq M \left\{ 2(n+2) \right\}^{ - \alpha / 2 },
\end{align}
and so we can deduce~\eqref{exp1} by combining~\eqref{Aij}--\eqref{expxi}.

Equation~\eqref{var1} is derived from the law of total variance by
\begin{align}
\nonumber
\var A_{ (i) (j) }&=\E_{\xi}\left[\var_{A|\xi} \left\{A_{ (i) (j) } \right\}\right]+
\var_{\xi}\left[\E_{A|\xi} \left\{A_{ (i) (j) } \right\}\right]
\\
\label{varterms}
&=\E_{\xi}\left\{\rho_n f\left(\xi_{(i)},\xi_{(j)} \right)\left(1- \rho_n f\left(\xi_{(i)},\xi_{(j)} \right)\right) \right\}\\
&\qquad +\E_{\xi}\left\{\rho_n^2 f^2\left(\xi_{(i)},\xi_{(j)} \right)\right\}-\E_{\xi}^2\left\{\rho_n f\left(\xi_{(i)},\xi_{(j)} \right)\right\},\; 1\le  i<j\le n.
\nonumber
\end{align}
The second and third terms in~\eqref{varterms} cancel, and thus we obtain that
\begin{align*}
\var A_{ (i) (j) }&=\rho_n \left\{ \E_{\xi} f\left(\xi_{(i)},\xi_{(j)} \right) \right\} \left\{ 1-\rho_n \E_{\xi} f\left(\xi_{(i)},\xi_{(j)} \right) \right\} .
\end{align*}
We now need to calculate expectations with respect to the latent vector $\xi$. Owing to~\eqref{expxi},
we can upper bound $\E_{\xi} f\left(\xi_{(i)},\xi_{(j)} \right) $ by the quantity $\rho_n f\left(i_n,j_n \right)+ \rho_n M\left\{2(n+2) \right\}^{-\alpha/2}$, and likewise the negative term $-\E_{\xi} f\left(\xi_{(i)},\xi_{(j)} \right)$ by the quantity $-\rho_n f\left(i_n,j_n \right)+ \rho_n M\left\{2(n+2) \right\}^{-\alpha/2}$. Similarly, $1 - \rho_n \E_{\xi} f\left(\xi_{(i)},\xi_{(j)} \right)$ and its negative can be lower bounded. Thus we may deduce the two inequalities
\begin{align*}
\nonumber
\var A_{ (i) (j) }&\le \rho_n\left[f\left(i_n,j_n\right)+M \left\{ 2(n+2) \right\}^{ - \alpha / 2 }\right]\\
\nonumber
&\qquad \cdot \left[1
-\rho_n f\left(i_n,j_n\right)+\rho_n M \left\{ 2(n+2) \right\}^{ - \alpha / 2 } \right],\\
\var A_{ (i) (j) }&\ge\rho_n\left[f\left(i_n,j_n\right)-M \left\{ 2(n+2) \right\}^{ - \alpha / 2 }\right]\\
\nonumber
&\qquad \cdot \left[1
-\rho_n f\left(i_n,j_n\right)-\rho_n M \left\{ 2(n+2) \right\}^{ - \alpha / 2 } \right].
\end{align*}
Combining these two relationships, we obtain~\eqref{var1}.

From the law of total covariance, we have that since $i<j$ and $m<l$, when at least either $i\neq m$ or $j \neq l$, the conditional independence of the Bernoulli trials comprising $A$ yields
\begin{align}
\label{covary1}
\cov\left\{A_{ (i) (j) } ,A_{(m) (l)}\right\}&=\E_{\xi}\left[\cov_{A|\xi} \left\{A_{ (i) (j) } ,A_{ (m) (l) } \right\}\right]\\
\nonumber
&\qquad+
\cov_{\xi}\left[\E_{A|\xi} \left\{A_{ (i) (j) } \right\},\E_{A|\xi} \left\{A_{ (m) (l) } \right\}\right]\\
&=\rho_n^2 \cov_\xi \left\{f\left(\xi_{(i)},\xi_{(j)} \right),f\left(\xi_{(m)},\xi_{(l)} \right)\right\}.
\nonumber
\end{align}
We now simplify this expression further, working directly with the form in~\eqref{covary1}. We define $j_n=j/(n+1)$, as well as $m_n=m/(n+1)$ and $l_n=l/(n+1)$. We then use the shift-invariance of the covariance operator to write
\begin{multline*}
\left|\cov_\xi\left\{f\left(\xi_{(i)},\xi_{(j)} \right),f\left(\xi_{(m)},\xi_{(l)} \right)\right\}\right|
\\ \le \left|\E_\xi \left[\left\{f\left(\xi_{(i)},\xi_{(j)} \right)-f\left(i_n,j_n\right)\right\}
\left\{f\left(\xi_{(m)},\xi_{(l)} \right)-f\left(m_n,l_n\right)\right\}\right]\right|,
\end{multline*}
where we have a bound, rather than equality, because we do \emph{not} claim that $\E\left\{f\left(\xi_{(i)},\xi_{(j)} \right)\right\}=f\left(i_n,j_n\right)$.
We may use Jensen's inequality to deduce that
\begin{multline*}
\left|\E_\xi \left\{f\left(\xi_{(i)},\xi_{(j)} \right)-f\left(i_n,j_n\right)\right\}\left\{f\left(\xi_{(m)},\xi_{(l)} \right)-f\left(m_n,l_n\right)\right\}\right|
\\ \le \E_\xi \left|\left\{f\left(\xi_{(i)},\xi_{(j)} \right)-f\left(i_n,j_n\right)\right\}\left\{f\left(\xi_{(m)},\xi_{(l)} \right)-f\left(m_n,l_n\right)\right\}\right|.
\end{multline*}
Now, because $f \in \operatorname{\textrm{H\"older}}^\alpha(M)$ by hypothesis, there exists $M<\infty$ such that
\begin{equation*}
\left|f\left(x,y\right)-f\left(x',y'\right)\right|\le M\left|(x,y)-(x',y')\right|^{\alpha} ,
\end{equation*}
and so we obtain that
\begin{multline*}
\E_\xi \left|f\left(\xi_{(i)},\xi_{(j)} \right)-f\left(i_n,j_n\right)\right|\left|f\left(\xi_{(m)},\xi_{(l)} \right)-f\left(m_n,l_n\right)\right|
\\
\le M^2 \E_\xi \left|\left(\xi_{(i)},\xi_{(j)} \right)-\left(i_n,j_n\right) \right|^{\alpha}
\left|\left(\xi_{(m)},\xi_{(l)} \right)-\left(m_n,l_n\right) \right|^{\alpha}.
\end{multline*}
From the Cauchy--Schwarz inequality, it therefore follows that
\begin{multline*}
\E_\xi \left|\left(\xi_{(i)},\xi_{(j)} \right)-\left(i_n,j_n\right) \right|^{\alpha}
\left|\left(\xi_{(m)},\xi_{(l)} \right)-\left(m_n,l_n\right) \right|^{\alpha}\\
 \le \sqrt{ \E_\xi \left|\left(\xi_{(i)},\xi_{(j)} \right)-\left(i_n,j_n\right) \right|^{2\alpha} }
\sqrt{ \E_\xi \left|\left(\xi_{(m)},\xi_{(l)} \right)-\left(m_n,l_n\right) \right|^{2\alpha} } .
\end{multline*}
We then calculate
\begin{align*}
\E_\xi \left|\left(\xi_{(m)},\xi_{(l)} \right)-\left(m_n,l_n\right) \right|^{2\alpha}
&=\E_\xi \left\{ \left(\xi_{(m)}-m_n\right)^2+\left(\xi_{(l)}-l_n\right)^2
\right\}^{\alpha}.
\end{align*}
Applying Jensen's inequality, we find that for $\alpha\leq 1$,
\begin{align*}
\E_\xi \left\{ \left(\xi_{(m)}-m_n\right)^2+\left(\xi_{(l)}-l_n\right)^2
\right\}^{\alpha}&\le \left[ \var \{ \xi_{(l)}\}+ \var \{ \xi_{(j)}\}\right]^{\alpha}\le
\left\{2 (n+2)\right\}^{-2\alpha}.
\end{align*}
Thus we may deduce that
\begin{equation*}
\left|\cov_\xi \left\{f\left(\xi_{(i)},\xi_{(j)} \right),f\left(\xi_{(m)},\xi_{(l)} \right)\right\}\right|
\le M^2
 \left[\left\{2(n+2) \right\}^{-\alpha}
\left\{2(n+2) \right\}^{-\alpha}\right]^{1/2}.
\end{equation*}
Combining this expression with~\eqref{covary1} then yields the stated result.
\end{proof}

\begin{lemma}\label{expdev}
Let $ f \in\operatorname{\textrm{H\"older}}^\alpha(M)$, and let $\{\xi_{(i)}\}_{i=1}^n$ be an ordered sample of independent $\operatorname{Uniform}(0,1)$ random variables. Then for $ 1 \leq i, j \leq n $ we have
\begin{equation*}
\E_\xi \, \bigl| f\left( \xi_{(i)},\xi_{(j)}\right) - f\bigl( i_n , j_n \bigr) \bigr|
\leq M \left\{ 2(n+2) \right\}^{ - \alpha / 2 } .
\end{equation*}
\end{lemma}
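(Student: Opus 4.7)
The plan is to reduce the claim to a variance calculation for uniform order statistics, using the Hölder hypothesis pointwise and then Jensen's inequality to move an expectation past a concave power. Since $f\in\operatorname{\textrm{H\"older}}^\alpha(M)$, the Euclidean Hölder bound gives, for every realisation of $\xi$,
\begin{equation*}
\bigl| f(\xi_{(i)},\xi_{(j)})-f(i_n,j_n)\bigr|
\le M\,\bigl\{(\xi_{(i)}-i_n)^2+(\xi_{(j)}-j_n)^2\bigr\}^{\alpha/2}.
\end{equation*}
Taking expectations, I am reduced to bounding $\E_\xi\{(\xi_{(i)}-i_n)^2+(\xi_{(j)}-j_n)^2\}^{\alpha/2}$ by $\{2(n+2)\}^{-\alpha/2}$.

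Next, because $0<\alpha\le 1$, the map $u\mapsto u^{\alpha/2}$ is concave on $[0,\infty)$, so Jensen's inequality (applied in its concave form) pulls the expectation through the power:
\begin{equation*}
\E_\xi\bigl\{(\xi_{(i)}-i_n)^2+(\xi_{(j)}-j_n)^2\bigr\}^{\alpha/2}
\le \bigl\{\E_\xi(\xi_{(i)}-i_n)^2+\E_\xi(\xi_{(j)}-j_n)^2\bigr\}^{\alpha/2}.
\end{equation*}
The standard fact that $\xi_{(i)}\sim\operatorname{Beta}(i,n-i+1)$ yields $\E_\xi \xi_{(i)}=i/(n+1)=i_n$ exactly, so the two terms on the right-hand side are simply $\var\xi_{(i)}$ and $\var\xi_{(j)}$, with $\var\xi_{(i)}=i(n-i+1)/\{(n+1)^2(n+2)\}$.

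Finally, the elementary inequality $i(n-i+1)\le (n+1)^2/4$ (maximum of a downward-opening parabola at $i=(n+1)/2$) gives the uniform-in-$i$ bound $\var\xi_{(i)}\le 1/\{4(n+2)\}$, so that
\begin{equation*}
\var\xi_{(i)}+\var\xi_{(j)}\le \tfrac{1}{2(n+2)},
\end{equation*}
and raising to the $\alpha/2$ power produces exactly $\{2(n+2)\}^{-\alpha/2}$. Multiplying by the leading $M$ closes the chain of inequalities. There is no substantive obstacle here: the only point requiring care is the direction of Jensen's inequality, which is the right one precisely because $\alpha/2\in(0,1/2]$ makes the power function concave; the remaining pieces are the mean/variance formulas for Beta order statistics and a one-line quadratic maximisation.
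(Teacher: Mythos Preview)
Your proposal is correct and follows essentially the same approach as the paper's proof: apply the H\"older bound pointwise, invoke Jensen's inequality for the concave map $u\mapsto u^{\alpha/2}$, and then use the standard variance bound $\var\xi_{(i)}\le 1/\{4(n+2)\}$ for uniform order statistics. Your write-up is somewhat more explicit than the paper's (spelling out the Beta distribution, the reason $\E_\xi(\xi_{(i)}-i_n)^2=\var\xi_{(i)}$, and the quadratic maximisation behind the variance bound), but the argument is identical in structure.
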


\begin{proof}
We note that as $ f \in\operatorname{\textrm{H\"older}}^\alpha(M)$,
\begin{equation*}
\label{eq:lips-ij}
\left| f\left( \xi_{(i)},\xi_{(j)}\right)-f\left( i_n,j_n\right)\right|\leq M\left| (
\xi_{(i)} , \xi_{(j)})-(
i_n, j_n)\right|^{\alpha}, \quad 1 \leq i, j \leq n .
\end{equation*}
Since $ \var \xi_{(i)} = i_n ( 1 - i_n ) / (n+2) \leq (1/4) / (n+2)$,
by Jensen's inequality we have for any $0 < \alpha \leq 1 $ that
\begin{equation*}
\E_{\xi} \left\{ ( \xi_{(i)} - i_n )^2 + ( \xi_{(j)} - j_n )^2 \right\}^{ \alpha/ 2 }
 \le \left( \var \xi_{(i)}+\var \xi_{(j)} \right)^{ \alpha / 2 }
\leq \left\{ 2(n+2) \right\}^{ - \alpha / 2 }.
\end{equation*}
This completes the proof.
\end{proof}

Lemma~\ref{expdev} has been adapted from Wolfe and Olhede~[16].

\begin{lemma}[Linear quadrature bounds]\label{quadmomlabel1}
Let $ f \in\operatorname{\textrm{H\"older}}^\alpha(M)$ be a symmetric function on $(0,1)^2$, and define $i_n=i/(n+1)$, $j_n=j/(n+1)$.  Then with
\begin{equation*}
\tilde f_{ a b } = \frac{ 1 }{ h_{ab}^2 } \sum_{ (i,j)\in R_{ab}} f\left(i_n,
j_n\right) ,\quad 1\le a\le b\le k,
\end{equation*}
we have that
\begin{equation*}
\left|\tilde f_{ a b }-\bar{f}_{ab}\right|\le
M \, 2^{\alpha/2}n^{-\alpha}\left\{1+2^\alpha \I\left(a=b\right)\right\}.
\end{equation*}
\end{lemma}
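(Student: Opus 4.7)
My plan is to exploit that $(i_n,j_n)=(i/(n+1),j/(n+1))$ sits inside the axis-aligned unit cell $C_{ij}:=[(i-1)/n,i/n]\times[(j-1)/n,j/n]$ of diameter $\sqrt{2}/n$, so by Hölder continuity $|f(i_n,j_n)-\bar f_{C_{ij}}|\le M(\sqrt2/n)^\alpha=M\,2^{\alpha/2}n^{-\alpha}$, where I write $\bar f_{C_{ij}}:=n^2\iint_{C_{ij}}f$ for the local cell average. For off-diagonal blocks $a<b$, the cells $\{C_{ij}\}_{(i,j)\in R_{ab}}$ tile $\omega_{ab}$ exactly, with $h_{ab}^2$ cells of equal area $1/n^2$; both $\bar f_{ab}$ and $\tilde f_{ab}$ are then equal-weight averages over these cells, and a one-line triangle inequality delivers the claimed bound $M\,2^{\alpha/2}n^{-\alpha}$ with no diagonal correction.

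The diagonal case $a=b$ is more delicate, because the cells attached to $R_{aa}$ only tile the strictly upper triangle of $\omega_{aa}$, leaving the on-diagonal cells $C_{ii}$ unassigned. Using symmetry of $f$ to write $\bar f_{aa}=h^{-2}[2\sum_{i<j}\bar f_{C_{ij}}+\sum_i\bar f_{C_{ii}}]$ and combining with $\tilde f_{aa}=\tfrac{2}{h(h-1)}\sum_{i<j}f(i_n,j_n)$, a short algebraic rearrangement (substituting $2\sum_{i<j}\bar f_{C_{ij}}=h^2\bar f_{aa}-\sum_i\bar f_{C_{ii}}$) yields the decomposition
\[
\tilde f_{aa}-\bar f_{aa}
=\underbrace{\tfrac{2}{h(h-1)}\sum_{i<j}\bigl[f(i_n,j_n)-\bar f_{C_{ij}}\bigr]}_{=:I_1}
+\underbrace{\tfrac{1}{h-1}\Bigl[\bar f_{aa}-\tfrac{1}{h}\sum_i\bar f_{C_{ii}}\Bigr]}_{=:I_2}.
\]
The first piece $I_1$ is an equally-weighted Riemann discrepancy, bounded by $M\,2^{\alpha/2}n^{-\alpha}$ exactly as in the off-diagonal case.

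For $I_2$, both $\bar f_{aa}$ and $h^{-1}\sum_i\bar f_{C_{ii}}$ are averages of $f$ over subsets of $\omega_{aa}$, whose diameter is $\sqrt2\,h/n$; a direct Hölder bound ($|f(x,y)-f(x',y')|\le M(\sqrt2\,h/n)^\alpha$ for all $(x,y),(x',y')\in\omega_{aa}$) then yields $|I_2|\le M\,2^{\alpha/2}h^\alpha/[(h-1)n^\alpha]$. The key arithmetic inequality is $h^\alpha/(h-1)\le 2^\alpha$ for every $h\ge 2$ and $\alpha\in(0,1]$: differentiating shows $h\mapsto h^\alpha/(h-1)$ is decreasing on $[2,\infty)$, with maximum $2^\alpha$ at $h=2$. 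Hence $|I_2|\le M\,2^{3\alpha/2}n^{-\alpha}$, and combining with $|I_1|$ gives the stated bound $M\,2^{\alpha/2}n^{-\alpha}(1+2^\alpha)$. The corner case $a=b=k$ is handled identically, with $h$ replaced by $h+r\ge 2$ throughout.

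The main obstacle is the decomposition itself: a naive tiling is not available because the $\binom{h}{2}$ grid points of $R_{aa}$ cannot be used to partition the upper triangle of $\omega_{aa}$ into equal-area small-diameter cells in a one-to-one way. Symmetry of $f$ resolves this by letting us re-express the block average through the full $h\times h$ grid, which concentrates the entire mismatch into the single diagonal-cell term $I_2$; the $1/(h-1)$ prefactor generated by the $R_{aa}$ normalization is then precisely what is needed to absorb the block-scale diameter $h/n$ back to the required $n^{-\alpha}$ rate with the correct constant.
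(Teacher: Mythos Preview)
Your proof is correct and follows the same overall architecture as the paper: both split into off-diagonal and diagonal cases, both control the Riemann discrepancy via the H\"older bound on the $1/n$-cells $C_{ij}$, and both reduce the diagonal correction to the inequality $h_b^{\alpha}/(h_b-1)\le 2^{\alpha}$ for $h_b\ge 2$, arriving at exactly the same constants.

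The one genuine difference is in how the diagonal remainder is extracted. The paper works directly with the staircase integral $\frac{n^2}{\binom{h_b}{2}}\sum_j\int_{(j-1)/n}^{j/n}\int_{(b-1)h/n}^{(j-1)/n}f\,dx\,dy$ and, through several lines of algebraic rearrangement of the integration limits, rewrites it as $\bar f_{bb}+\frac{1}{h_b-1}\cdot\frac{2n^2}{h_b}\sum_j\int_{(j-1)/n}^{j/n}\int_{(j-1)/n}^{y}(\bar f_{bb}-f)\,dx\,dy$, then bounds the bracket by $M(\sqrt{2}h_b/n)^{\alpha}$. You instead invoke the symmetry of $f$ to write $\bar f_{aa}$ as the full $h\times h$ cell-average $\frac{1}{h^2}\bigl[2\sum_{i<j}\bar f_{C_{ij}}+\sum_i\bar f_{C_{ii}}\bigr]$, which immediately isolates the mismatch as your $I_2=\frac{1}{h-1}\bigl[\bar f_{aa}-\frac{1}{h}\sum_i\bar f_{C_{ii}}\bigr]$. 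This is cleaner: it avoids the integral manipulations entirely and makes it transparent that the diagonal defect is exactly the difference of two probability-measure averages of $f$ over $\omega_{aa}$, bounded by $M(\sqrt{2}h/n)^{\alpha}$ in one step. The paper's route does not use symmetry of $f$ at this stage and would in principle apply to non-symmetric $f$ as well, but since symmetry is already a standing assumption, your shortcut costs nothing.
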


\begin{proof}
We start from the definition of
\begin{equation*}
\nonumber
 \tilde f_{ a b }= \frac{1}{h_{ab}^{2 }} \sum_{(i,j)\in R_{ab}} f\left(i_n,j_n\right).
\end{equation*}
Thus we may by simple expansion determine
\begin{align*}
\nonumber
\tilde f_{ a b }&=\frac{n^2}{h_{ab}^2}
\sum_{(i,j)\in R_{ab}}
\int_{\frac{j-1}{n}}^{\frac{j}{n}}\int_{\frac{i-1}{n}}^{\frac{i}{n}}\left[f\left(x,y\right)+f\left(i_n,j_n\right)-f\left(x,y\right) \right]\,dx\,dy\\
&=\frac{n^2}{h_{ab}^2} \sum_{(i,j)\in R_{ab}}\left[
\int_{\frac{j-1}{n}}^{\frac{j}{n}}\int_{\frac{i-1}{n}}^{\frac{i}{n}}f\left(x,y\right)\,dx\,dy+\int_{\frac{j-1}{n}}^{\frac{j}{n}}\int_{\frac{i-1}{n}}^{\frac{i}{n}}\left\{f\left(i_n,j_n\right)-f\left(x,y\right) \right\}\,dx\,dy.
\right]\end{align*}
We now use the fact that $f\left(x,y\right)\in \operatorname{\textrm{H\"older}}^\alpha(M)$.  Thus we may write
\begin{align}
\nonumber
&\left|\frac{n^2}{h_{ab}^2} \sum_{(i,j)\in R_{ab}}
\int_{\frac{j-1}{n}}^{\frac{j}{n}}\int_{\frac{i-1}{n}}^{\frac{i}{n}}\left\{f\left(i_n,j_n\right)-f\left(x,y\right) \right\}\,dx\,dy\right|\\
&\qquad \le \frac{n^2}{h_{ab}^2} \sum_{(i,j)\in R_{ab}}
\int_{\frac{j-1}{n}}^{\frac{j}{n}}\int_{\frac{i-1}{n}}^{\frac{i}{n}}\left|f\left(i_n,j_n\right)-f\left(x,y\right) \right|\,dx\,dy \le \frac{M2^{\alpha/2}}{n^\alpha},
\label{both}
\end{align}
with the  last inequality following from the fact that $f$ is an $\alpha$-H\"older function on the domain of integration.  Furthermore, we note directly if $a<b$ then, with $\omega_{ab}$ as defined in~\eqref{omegaab},
\begin{align}
\frac{n^2}{h_{ab}^2} \sum_{(i,j)\in R_{ab}}
\int_{\frac{j-1}{n}}^{\frac{j}{n}}\int_{\frac{i-1}{n}}^{\frac{i}{n}}f\left(x,y\right)\,dx\,dy
&=\frac{n^2}{h_{ab}^2}
\iint_{\omega_{ab}}f\left(x,y\right)\,dx\,dy.
\label{offd}
\end{align}
If on the other hand $a=b$ then
\begin{multline}
\frac{n^2}{\tbinom{h_b}{2}} \sum_{(i,j)\in R_{bb}}
\int_{\frac{j-1}{n}}^{\frac{j}{n}}\int_{\frac{i-1}{n}}^{\frac{i}{n}}f\left(x,y\right)\,dx\,dy\\
=\frac{n^2}{\tbinom{h_b}{2}}\sum_{j=(b-1)h+1}^{hb \I(b<k)+n \I(b=k)}\sum_{i=(b-1)h+1}^{j-1}
\int_{\frac{j-1}{n}}^{\frac{j}{n}}\int_{\frac{i-1}{n}}^{\frac{i}{n}}f\left(x,y\right)\,dx\,dy.
\label{ond}
\end{multline}
This equation acknowledges that group $a$ has size $h_a$, which is equal to $h$ for $a=1,\dots,k-1$,
and $h_k=h+r$ for $a=k$.
We shall start by simplifying this expression. We note that the latter becomes:
\begin{align}
\nonumber
\frac{n^2}{\tbinom{h_b}{2}}&\sum_{j=(b-1)h+1}^{hb \I(b<k)+n \I(b=k)}
\int_{\frac{j-1}{n}}^{\frac{j}{n}}\int_{\frac{(b-1)h}{n}}^{\frac{j-1}{n}}f\left(x,y\right)\,dx\,dy
\\
\nonumber
&=\bar f_{bb}+ \!\!\!\! \sum_{j=(b-1)h+1}^{hb \I(b<k)+n \I(b=k)}\int_{\frac{j-1}{n}}^{\frac{j}{n}}\left\{\left[ \frac{n^2}{\tbinom{h_b}{2}}-\frac{2n^2}{h^2_b}\right]\int_{\frac{(b-1)h}{n}}^{y}-\frac{n^2}{\tbinom{h_b}{2}}\int_{\frac{j-1}{n}}^{y}\right\}
f(x,y)\,dx\,dy\\
\nonumber
&=\bar f_{bb}+\frac{2n^2}{h_b}\sum_{j=(b-1)h+1}^{hb \I(b<k)+n \I(b=k)}
\int_{\frac{j-1}{n}}^{\frac{j}{n}}\left\{\left[ \frac{1}{h_b-1}-\frac{1}{h_b}\right]\int_{\frac{(b-1)h}{n}}^{y}-\frac{1}{h_b-1}\int_{\frac{j-1}{n}}^{y}\right\}\\
& \nonumber \hskip5.5cm\cdot
f(x,y)\,dx\,dy\\
\nonumber
&=\bar f_{bb}+\frac{2n^2}{h_b}\sum_{j=(b-1)h+1}^{hb \I(b<k)+n \I(b=k)}
\int_{\frac{j-1}{n}}^{\frac{j}{n}}\left\{\frac{1}{h_b(h_b-1)}\int_{\frac{(b-1)h}{n}}^{y}-\frac{1}{h_b-1}\int_{\frac{j-1}{n}}^{y}\right\}\\
& \nonumber \hskip5.5cm\cdot
f(x,y)\,dx\,dy\\
\nonumber
&=\bar f_{bb}+\frac{1}{(h_b-1)}\sum_{j=(b-1)h+1}^{hb \I(b<k)+n \I(b=k)}
\int_{\frac{j-1}{n}}^{\frac{j}{n}}\left\{\frac{2n^2}{h_b^2}\int_{\frac{(b-1)h}{n}}^{y}-\frac{2n^2}{h_b}\int_{\frac{j-1}{n}}^{y}\right\}
f(x,y)\,dx\,dy\\
&=\bar f_{bb}+\frac{1}{(h_b-1)}\left\{ \bar f_{bb}-\frac{2n^2}{h_b}\sum_{j=(b-1)h+1}^{hb \I(b<k)+n \I(b=k)}
\int_{\frac{j-1}{n}}^{\frac{j}{n}}\int_{\frac{j-1}{n}}^{y}
f(x,y)\,dx\,dy\right\}\nonumber\\
&=\bar f_{bb}+\frac{1}{(h_b-1)}\left\{ \frac{2n^2}{h_b}\sum_{j=(b-1)h+1}^{hb \I(b<k)+n \I(b=k)}
\int_{\frac{j-1}{n}}^{\frac{j}{n}}\int_{\frac{j-1}{n}}^{y}
\left(\bar f_{bb}-f(x,y)\right)\,dx\,dy\right\}.
\label{ond2}
\end{align}
We note that
\begin{multline}
\left|\frac{2n^2}{h_b}\sum_{j=(b-1)h+1}^{hb \I(b<k)+n \I(b=k)}
\int_{\frac{j-1}{n}}^{\frac{j}{n}}\int_{\frac{j-1}{n}}^{y}
\left(\bar f_{bb}-f(x,y)\right)\,dx\,dy\right|\\
\quad \le \frac{2n^2}{h_b}\sum_{j=(b-1)h+1}^{hb \I(b<k)+n \I(b=k)}
\int_{\frac{j-1}{n}}^{\frac{j}{n}}\int_{\frac{j-1}{n}}^{y}
\left|\bar f_{bb}-f(x,y)\right|\,dx\,dy\le M\left(\sqrt{2} h_b/n\right)^{\alpha}.
\label{finalt}
\end{multline}
Thus, combining~\eqref{finalt} with~\eqref{both},~\eqref{ond}, and~\eqref{ond2}, we have
\begin{align*}
\left| \bar f_{bb}-\tilde f_{bb}
\right|\le \frac{M2^{\alpha/2}}{n^\alpha}+M\left(\sqrt{2} h_b/n\right)^{\alpha}\frac{1}{h_b-1}.
\end{align*}
From the off-diagonal entries $a<b$ we may conclude from~\eqref{both} and~\eqref{offd}
that
\begin{align*}
\left| \bar f_{ab}-\tilde f_{ab}
\right|\le \frac{M2^{\alpha/2}}{n^\alpha}.
\end{align*}
Thus it follows that
\begin{equation*}
 \left| \bar f_{ab}-\tilde f_{ab}
\right| \le \frac{M 2^{\alpha/2}}{n^\alpha}
+\begin{cases}
0 & a\neq b,\\
M\left(\sqrt{2} h_b/n\right)^{\alpha}\frac{1}{h_b-1} &a= b.
\end{cases}
\end{equation*}
Since $\frac{h_b^{\alpha}}{(h_b-1)}\le 2^\alpha$ if $h_b\ge 2$, the expression follows.
This concludes the proof.
\end{proof}

Lemma~\ref{quadmomlabel1} has been adapted from Wolfe and Olhede~[16].

\begin{lemma}[Square quadrature bounds]\label{quadmomlabel}
Let $ f \in\operatorname{\textrm{H\"older}}^\alpha(M)$ be a symmetric function on $(0,1)^2$, and define $i_n=i/(n+1)$, $j_n=j/(n+1)$.  Then with
\begin{equation*}
\widetilde{ f^2} _{ a b }= \frac{ 1 }{ h_{ab}^2 } \sum_{ (i,j)\in R_{ab}} f^2\left(i_n,
j_n\right) ,\quad 1\le a\le b\le k,
\end{equation*}
we have that
\begin{equation*}
\left|\widetilde{ f^2} _{ a b }-\overline{f^2}_{ab}\right|\le
\frac{2\|f\|_{\infty} M 2^{\alpha/2}}{n^{\alpha}}\left\{1+2^\alpha\I(a=b)
\right\}.
\end{equation*}
\end{lemma}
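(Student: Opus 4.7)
The plan is to reduce this square quadrature bound to the linear quadrature bound already established in Lemma~\ref{quadmomlabel1} by observing that $f^2$ inherits H\"older continuity from $f$, with a modified constant.

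First I would verify the key inheritance claim: if $f \in \operatorname{\textrm{H\"older}}^\alpha(M)$, then $f^2 \in \operatorname{\textrm{H\"older}}^\alpha(2 \|f\|_\infty M)$. This follows from the algebraic identity $f^2(x,y) - f^2(x',y') = \{f(x,y) + f(x',y')\} \{f(x,y) - f(x',y')\}$, together with the trivial bound $|f(x,y) + f(x',y')| \leq 2 \|f\|_\infty$ and the assumed H\"older bound on $f$. Note that $\|f\|_\infty < \infty$ is guaranteed because $f$ is continuous on the compact closure of $(0,1)^2$ by virtue of being H\"older continuous there (and $\rho_n f$ is assumed bounded away from $1$ elsewhere in the paper, so $f$ is in particular bounded).

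Next I would simply invoke Lemma~\ref{quadmomlabel1} with the function $f$ replaced by $f^2$ and the H\"older constant $M$ replaced by $2 \|f\|_\infty M$. The conclusion of that lemma applied to $f^2$ reads
\begin{equation*}
\bigl| \widetilde{f^2}_{ab} - \overline{f^2}_{ab} \bigr|
\le (2 \|f\|_\infty M) \cdot 2^{\alpha/2} n^{-\alpha} \bigl\{ 1 + 2^\alpha \I(a=b) \bigr\} ,
\end{equation*}
which is exactly the stated bound.

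I do not anticipate a genuine obstacle here, as the entire argument is a one-line reduction once the Hölder constant of $f^2$ is identified; the heavy lifting (the delicate on-diagonal quadrature computation in Lemma~\ref{quadmomlabel1}, where the bin corresponds to the triangular index set with normalization $\binom{h_b}{2}$ instead of $h_b^2$) has already been done in the preceding lemma and is inherited without change. The only care needed is to record that $\|f\|_\infty$ appears as an explicit prefactor in the final bound, which is why the statement has the factor $2 \|f\|_\infty$ that is absent from Lemma~\ref{quadmomlabel1}.
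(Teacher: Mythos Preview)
Your proposal is correct. Both you and the paper rely on the same key observation---the factorization $f^2(x,y)-f^2(x',y')=\{f(x,y)+f(x',y')\}\{f(x,y)-f(x',y')\}$ together with the bound $|f(x,y)+f(x',y')|\le 2\|f\|_\infty$---so the underlying idea is identical. The difference is purely one of packaging: the paper re-runs the entire quadrature argument of Lemma~\ref{quadmomlabel1} line by line with $f^2$ in place of $f$, inserting the factor $2\|f\|_\infty$ at each step where the H\"older bound is invoked, whereas you observe once and for all that $f^2\in\operatorname{\textrm{H\"older}}^\alpha(2\|f\|_\infty M)$ and then apply Lemma~\ref{quadmomlabel1} as a black box. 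Your route is more economical and avoids repeating the delicate on-diagonal calculation; the paper's route has the minor advantage of being self-contained and making the provenance of the $2\|f\|_\infty$ factor visible at every stage.
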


\begin{proof}
We start from
\begin{align*}
\nonumber
\widetilde{ f^2} _{ a b }& =\frac{n^2}{h_{ab}^2}
\sum_{(i,j)\in R_{ab}}
\int_{\frac{j-1}{n}}^{\frac{j}{n}}\int_{\frac{i-1}{n}}^{\frac{i}{n}}\left[f^2\left(x,y\right)+f^2\left(i_n,j_n\right)-f^2\left(x,y\right) \right]\,dx\,dy\\
&\quad =\frac{n^2}{h_{ab}^2} \sum_{(i,j)\in R_{ab}}
\left\{\int_{\frac{j-1}{n}}^{\frac{j}{n}}\int_{\frac{i-1}{n}}^{\frac{i}{n}}f^2\left(x,y\right)+\int_{\frac{j-1}{n}}^{\frac{j}{n}}\int_{\frac{i-1}{n}}^{\frac{i}{n}}f^2\left(i_n,j_n\right)-f^2\left(x,y\right) \right\}\,dx\,dy.
\end{align*}
We now use  that $f\in\operatorname{\textrm{H\"older}}^\alpha(M)$.  We write
\begin{align}
\nonumber
&\left|\frac{n^2}{h_{ab}^2} \sum_{(i,j)\in R_{ab}}
\int_{\frac{j-1}{n}}^{\frac{j}{n}}\int_{\frac{i-1}{n}}^{\frac{i}{n}}\left\{f^2\left(i_n,j_n\right)-f^2\left(x,y\right) \right\}\,dx\,dy\right|\\
\nonumber
&\qquad \le \frac{n^2}{h_{ab}^2} \sum_{(i,j)\in R_{ab}}
\int_{\frac{j-1}{n}}^{\frac{j}{n}}\int_{\frac{i-1}{n}}^{\frac{i}{n}}\left|f^2\left(i_n,j_n\right)-f^2\left(x,y\right) \right|\,dx\,dy\\
\nonumber
&\qquad =\frac{n^2}{h_{ab}^2} \sum_{(i,j)\in R_{ab}}
\int_{\frac{j-1}{n}}^{\frac{j}{n}}\int_{\frac{i-1}{n}}^{\frac{i}{n}}\left|f\left(i_n,j_n\right)+f\left(x,y\right) \right|
\left|f\left(i_n,j_n\right)-f\left(x,y\right) \right|\,dx\,dy\\
\nonumber
&\qquad \le \frac{2 \|f\|_{\infty} n^2}{h_{ab}^2} \sum_{(i,j)\in R_{ab}}
\int_{\frac{j-1}{n}}^{\frac{j}{n}}\int_{\frac{i-1}{n}}^{\frac{i}{n}}
\left|f\left(i_n,j_n\right)-f\left(x,y\right) \right|\,dx\,dy\\
&\qquad \le \frac{2 \|f\|_{\infty} M2^{\alpha/2}}{n^\alpha},
\label{bothsq}
\end{align}
with the final inequality following from~\eqref{both} of the previous lemma. We note directly if $a<b$ then
\begin{align}
\frac{n^2}{h_{ab}^2} \sum_{(i,j)\in R_{ab}}
\int_{\frac{j-1}{n}}^{\frac{j}{n}}\int_{\frac{i-1}{n}}^{\frac{i}{n}}f^2\left(x,y\right)\,dx\,dy
&=\frac{n^2}{h_{ab}^2}
\iint_{\omega_{ab}}f^2\left(x,y\right)\,dx\,dy.
\label{offdsquare}
\end{align}
From the off-diagonal entries, for which $a<b$, we may conclude directly from~\eqref{bothsq} and~\eqref{offdsquare}
that
\begin{align*}
\left| \overline{ f^2}_{ab}-\widetilde{ f^2}_{ab}
\right|\le \frac{2 \|f\|_{\infty} M2^{\alpha/2}}{n^\alpha}.
\end{align*}
If on the other hand $a=b$ then
\begin{multline*}
\frac{n^2}{\tbinom{h_b}{2}} \sum_{(i,j)\in R_{bb}}
\int_{\frac{j-1}{n}}^{\frac{j}{n}}\int_{\frac{i-1}{n}}^{\frac{i}{n}}f^2\left(x,y\right)\,dx\,dy
\\ =\frac{n^2}{\tbinom{h_b}{2}}\sum_{j=(b-1)h+1}^{hb \I(b<k)+n \I(b=k)}\sum_{i=(b-1)h+1}^{j-1}
\int_{\frac{j-1}{n}}^{\frac{j}{n}} \int_{\frac{i-1}{n}}^{\frac{i}{n}}f^2\left(x,y\right)\,dx\,dy.
\end{multline*}
We shall start by simplifying this expression. We note that the latter becomes:
\begin{align}
\nonumber
\frac{n^2}{\tbinom{h_b}{2}}&\sum_{j=(b-1)h+1}^{hb \I(b<k)+n \I(b=k)}
\int_{\frac{j-1}{n}}^{\frac{j}{n}}\int_{\frac{(b-1)h}{n}}^{\frac{j-1}{n}}f^2\left(x,y\right)\,dx\,dy\\
&=\overline{ f^2}_{bb}+\frac{1}{(h_b-1)}\frac{2n^2}{h_b}\sum_{j=(b-1)h+1}^{hb \I(b<k)+n \I(b=k)}
\int_{\frac{j-1}{n}}^{\frac{j}{n}}\int_{\frac{j-1}{n}}^{y}
\left(\overline{ f^2}_{bb}-f^2(x,y)\right)\,dx\,dy.\nonumber
\end{align}
We may note directly that
\begin{equation*}
\frac{(b-1)h}{n}\le x<y\le \frac{b h\I(b<k)+n \I(b=k)}{n}=\frac{bh}{n} \I(b<k)+\I(b=k),
\end{equation*}
and so it follows that
\begin{align*}
\left|\overline{ f^2}_{bb}\right. &\left.-f^2(x,y)\right|\le \left|\frac{\int_{\frac{h(b-1)}{n}}^{\frac{bh}{n} \I(b<k)+\I(b=k) } \int_{\frac{(b-1)h}{n}}^{\frac{bh}{n} \I(b<k)+\I(b=k)}f^2(x',y')\,dx'\,dy'}{\left\{\frac{h_b}{n}\right\}^2}-f^2(x,y)\right|\\
&\le \frac{n^2}{h_b^2}\int_{\frac{(b-1)h}{n}}^{\frac{bh}{n} \I(b<k)+\I(b=k)} \int_{\frac{(b-1)h}{n}}^{\frac{bh}{n} \I(b<k)+\I(b=k) }\left|f^2(x',y')-f^2(x,y)\right|\,dx'\,dy'\\
&\le \frac{2 \|f\|_{\infty} n^2 }{h_b^2}\int_{\frac{(b-1)h}{n}}^{\frac{bh}{n} \I(b<k)+\I(b=k) } \int_{\frac{(b-1)h}{n}}^{\frac{bh}{n} \I(b<k)+\I(b=k) }\left|f(x',y')-f(x,y)\right|\,dx'\,dy'\\
&\le 2 \|f\|_{\infty}  M \left( \frac{\sqrt{2} h_b}{n}\right)^{\alpha}.
\end{align*}
Thus
\begin{align*}
&\left|\frac{n^2}{\tbinom{h_b}{2}}\sum_{j=(b-1)h+1}^{hb \I(b<k)+n \I(b=k)}
\int_{\frac{j-1}{n}}^{\frac{j}{n}}\int_{\frac{(b-1)h}{n}}^{\frac{j-1}{n}}f^2\left(x,y\right)\,dx\,dy
-\overline{ f^2}_{bb}\right|\\
&\qquad \le \frac{1}{(h_b-1)}\frac{2n^2}{h_b}\sum_{j=(b-1)h+1}^{hb \I(b<k)+n \I(b=k)}
\int_{\frac{j-1}{n}}^{\frac{j}{n}}\int_{\frac{j-1}{n}}^{y}
2 \|f\|_{\infty}  M \left( \frac{\sqrt{2} h_b}{n}\right)^{\alpha}\,dx\,dy.\\
&\qquad=\frac{1}{(h_b-1)}2 \|f\|_{\infty}  M \left( \frac{\sqrt{2} h_b}{n}\right)^{\alpha}.
\end{align*}
For the on-diagonal entries having $a=b$, it therefore follows that
\begin{align*}
\left| \overline{ f^2}_{bb}-\widetilde{ f^2}_{bb}
\right|\le \frac{2 \|f\|_{\infty} M2^{\alpha/2}}{n^\alpha}+\frac{1}{(h_b-1)}2 \|f\|_{\infty}  M \left( \frac{\sqrt{2} h_b}{n}\right)^{\alpha}.
\end{align*}
Note that $\frac{h_b^{\alpha}}{(h_b-1)}\le 2^{\alpha}$ if $h_b\ge 2$, and so the expression follows.
\end{proof}

\providecommand*\hyphen{-}

\end{document}